\spnewtheorem{observation}[theorem]{Observation}{\bfseries}{\itshape}
\DeclareFontFamily{U}{stmry}{}
\DeclareFontShape{U}{stmry}{m}{n}
{ <-6> stmary5 <6-7> stmary6 <7-8> stmary8
	<8-9> stmary9 <9-10> stmary9
	<10-> stmary10
}{}
\definecolor{superred}{rgb}{0.87,0.1,0.1}
\definecolor{superblue}{RGB}{66,102,170}
\let\oldsum\sum
\renewcommand{\sum}{{\textstyle\oldsum\nolimits}}
\let\oldbigcup\bigcup
\renewcommand{\bigcup}{{\textstyle\oldbigcup\nolimits}}
\let\oldbigcap\bigcap
\renewcommand{\bigcap}{{\textstyle\oldbigcap\nolimits}}
\renewcommand{\frac}[2]{{\tfrac{#1}{#2}}}
\let\oldexists\exists
\renewcommand{\exists}{\oldexists\,}
\let\oldnexists\nexists
\renewcommand{\nexists}{\oldnexists\,}
\let\oldforall\forall
\renewcommand{\forall}{\oldforall\,}
\newcommand{\1}[1]{\operatorname{\mathds{1}}\!\big( #1 \big)}
\newcommand{\MaxMPM}{\textsf{Max-MPM}\xspace}
\newcommand{\MinMPM}{\textsf{Min-MPM}\xspace}
\newcommand{\MIM}{\textsf{MIM}\xspace}
\newcommand{\MUM}{\textsf{MUM}\xspace}
\newcommand{\IM}[1]{\textsf{#1-IM}\xspace}
\newcommand{\UM}[1]{\textsf{#1-UM}\xspace}
\newcommand{\opt}{\ensuremath{\textsf{opt}}}
\newcommand{\apx}{\ensuremath{\textsf{apx}}}
\newcommand{\profit}{p}
\newcommand{\uprofit}{c}
\newcommand{\qceil}{\overline{q}}
\DeclarePairedDelimiter{\range}{[}{]}
\DeclarePairedDelimiter{\Range}{\llbracket}{\rrbracket}
\newcommand{\set}[2]{\pgfmathsetmacro{#1}{int(#2)}}
\newcommand{\G}{\ensuremath{\mathcal{G}}\xspace}
\newcommand{\M}{\ensuremath{\mathcal{M}}\xspace}
\begin{document}
	\title{Approximating Multistage Matching Problems}
	\titlerunning{Approx.\ Multistage Matching Problems}
	\author{Markus Chimani\orcidID{0000-0002-4681-5550}
		\and Niklas Troost\orcidID{0000-0001-7412-2770} % <-- corresponding author
		\and Tilo Wiedera\orcidID{0000-0002-5923-4114}}
	\authorrunning{M. Chimani\and N. Troost\and T. Wiedera}
	\institute{Theoretical Computer Science, Osnabrück University\\
		\email{$\{\text{markus.chimani,niklas.troost,tilo.wiedera}\}$@uos.de}}
	\maketitle

\begin{abstract}
In \emph{multistage perfect matching} problems, we are given a sequence of graphs on the same vertex set and are asked to find a sequence of perfect matchings, corresponding to the sequence of graphs, such that consecutive matchings are as similar as possible.
More precisely, we aim to maximize the intersections, or minimize the unions between consecutive matchings.

We show that these problems are NP-hard even in very restricted scenarios.
As our main contribution, we present the first non-trivial approximation algorithms for these problems:
On the one hand, we devise a tight approximation on graph sequences of length~two ($2$-stage graphs).
On the other hand, we propose several general methods to deduce multistage approximations from blackbox approximations on $2$-stage graphs.
\keywords{Temporal Graphs \and Approximation Algorithms \and Perfect Matchings}
\end{abstract}

	\section{Introduction}
	The study of graphs that evolve over time emerges naturally in several applications. As such, it is a well-known subject in graph theory~\cite{AMSZ20,BEK19,BEST19,BET19,BELP18,BBR19,BS15,BHI18,BLSZ14,C18,E91,FNRZ19,GTW14,HHKNRS19,T00,KKK00,OS14}.
	While there are many possible approaches to model these problems (cf.\ the discussion of related work), the paradigm of \emph{multistage graphs} has attracted quite some attention in recent years \cite{BEK19,BELP18,BEST19,BET19,FNRZ19,GTW14,HHKNRS19}.
	In this setting, we are given a sequence of separate, but related graphs (\emph{stages}).
	A typical goal is to find a sequence of solutions for each individual graph such that the change in the solutions between consecutive graphs is minimized.
	Since multistage graph problems usually turn out to be NP-hard, one often resorts to FPT- or approximation algorithms.
	To the best of our knowledge, all approximation results in this setting discuss combined objective functions that reflect a trade-off between the quality of each individual solution and the cost of the change over time (cf., e.g., \cite{GTW14,BELP18}).
	However, this is a drawback if one requires each stage's solution to attain a certain quality guarantee (e.g., optimality).
	Trying to ensure this by adjusting the trade-off weights in the above approximation algorithms leads to approximation ratios that no longer effectively bound the cost of change.
	Here, we discuss a multistage graph problem where each individual solution is necessarily optimal, but we can still obtain an approximation ratio on the cost of the change over time.

	A classical example are \emph{multistage matching} problems, i.e., natural multistage generalizations of traditional matching problems (e.g., perfect matching, maximum weight matching, etc.).
	This is particularly interesting as optimality for a single stage would be obtainable in polynomial time, but all known multistage variants are NP-hard already for two stages.
	There are several known approximation algorithms for multistage matching problems~\cite{BELP18};
		however, they all follow the trade-off paradigm.

	In this paper, we are concerned with maintaining a perfect matching on a multistage graph,
		such that the changes between consecutive matchings are minimized.
	After showcasing the complexities of our problems (\cref{sec:compl}),
		we will devise efficient approximation algorithms (\cref{sec:apx}).

	\paragraph{Definitions and preliminaries.}
	Let $G=(V,E)$ be an undirected graph.
	For a set~$W\subseteq V$ of vertices, let $\delta(W) \coloneqq \big\{uv\in E\mid u\in W, v\in V\setminus W\big\}$ denote the set of its cut edges.
	For a singleton $\{v\}$, we may write $\delta(v)$ instead of $\delta(\{v\})$.
	A set~$M\subseteq E$ of edges is a \emph{matching}
		if every vertex is incident to at most one edge of $M$;
		it is a \emph{perfect} matching if $|\delta(v)\cap M|=1$ for every $v\in V$.
	A $k$-\emph{cycle} ($k$-\emph{path}) is a cycle (path, respectively) consisting of exactly $k$ edges.
	The parity of a $k$-cycle is the parity of~$k$.
	For a set~$F\subseteq E$ of edges, let $V(F) \coloneqq \{v\in V\mid \delta(v)\cap F\neq \varnothing\}$ denote its incident vertices.

	For $x\in\mathbb N$, we define $\range{x} \coloneqq \{1,...,x\}$ and $\Range{x} \coloneqq \{0\}\cup\range{x}$.
	A \emph{temporal graph} (or \emph{$\tau$-stage graph}) is a tuple $\G=(V,E_1,..., E_\tau)$
		consisting of a vertex set~$V$ and multiple edge sets~$E_i$, one for each $i\in\range{\tau}$.
	The graph $G_i \coloneqq (V(E_i),E_i)$ is the $i$\emph{th stage of}~\G.
	We define $n_i \coloneqq |V(E_i)|$, and $n \coloneqq  |V|$.
	A temporal graph is \emph{spanning} if $V(E_i) = V$ for each $i\in\range{\tau}$.

	Let $\mu \coloneqq \max_{i\in\range{\tau -1}} |E_i\cap E_{i+1}|$ denote the maximum number of edges that are common between two adjacent stages.
	Let $E_\cap \coloneqq \bigcap_{i\in\range{\tau}} E_i$ and $E_\cup \coloneqq \bigcup_{i\in\range{\tau}} E_i$.
	The graph $G_\cup \coloneqq(V(E_\cup),E_\cup)$ is the \emph{union graph} of~$\G$.
	A~\emph{multistage perfect matching} in~\G is a sequence $\M \coloneqq (M_i)_{i\in\range{\tau}}$ such that for each~$i\in\range{\tau}$, $M_i$~is a perfect matching in $G_i$.

	All problems considered in this paper (\MIM, \MUM, \MinMPM, \MaxMPM; see below) are of the following form:
	Given a temporal graph~$\G$, we ask for a multistage perfect matching $\M$ optimizing some objective function.
	In their respective decision variants, the input furthermore consists of some value $\kappa$
	and we ask whether there is an \M with objective value at most (minimization problems) or at least (maximization problems)~$\kappa$.

	\begin{definition}[\MIM and $\IM{$\tau$}$]
		Given a temporal graph~$\G$, the \emph{multistage intersection matching problem~(\MIM)} asks for a multistage perfect matching~$\M$ of~$\G$ with maximum
		\emph{profit}~$\profit(\M) \coloneqq \sum_{i\in\range{\tau-1}} |M_i\cap M_{i+1}|$.
		For fixed~$\tau$, we denote the problem by~$\IM{$\tau$}$.
	\end{definition}

	We also consider the natural inverse objective, i.e., minimizing the unions.
		While the problems differ in the precise objective function, an optimal solution of \MIM is optimal for \MUM as well, and vice versa.

	\begin{definition}[\MUM and $\UM{$\tau$}$]
		Given a temporal graph~$\G$, the \emph{multistage union matching problem~(\MUM)} asks for a multistage perfect matching~$\M$ of~$\G$ with minimum~\emph{cost}~$\uprofit(\M) \coloneqq \sum_{i\in\range{\tau-1}} |M_i\cup M_{i+1}|$.
		For fixed~$\tau$, we denote the problem by~$\UM{$\tau$}$.
	\end{definition}

	Consider either \MIM or \MUM.
	Given a temporal graph $\G$, we denote with $\opt$ the optimal solution value and with $\apx$ the objective value achieved by a given algorithm with input $\G$.
	The \emph{approximation ratio} of an approximation algorithm for \MIM (\MUM) is the infimum (supremum, respectively) of $\apx / \opt$ over all instances.

	\paragraph{Related work.}
	The classical \emph{dynamic graph} setting often considers small modifications, e.g., single edge insertions/deletions~\cite{E91,T00}.
	There, one is given a graph with a sequence of modifications
		and asked for a feasible solution after each modification.
	A natural approach to tackle matchings in such graphs is to make local changes to the previous solutions~\cite{BS15,BHI18,BLSZ14,S07}.

	A more general way of modeling changes is that of \emph{temporal graphs},
		introduced by Kempe et al.~\cite{KKK00} and used herein.
	Typically, each vertex and edge is assigned a set of time intervals that specify when it is present.
	This allows an arbitrary number of changes to occur at the same time.
	Algorithms for this setting usually require a more global perspective and many approaches do not rely solely on local changes.
	In fact, many temporal (matching) problems turn out to be hard, even w.r.t.\ approximation and fixed-parameter-tractability~\cite{AMSZ20, BBR19, C18, MMNZZ19, OS14}.

	One particular flavor of temporal graph problems is concerned with obtaining a sequence of solutions---one for each stage---while optimizing a global quantity.
	These problems are often referred to as \emph{multistage problems} and gained much attention in recent years~\cite{BELP18, BEK19, BEST19, BET19, FNRZ19, GTW14, HHKNRS19},
		including in the realm of matchings:
		e.g., the authors of~\cite{HHKNRS19} show $\mathrm W[1]$-hardness for finding the largest edge set that induces a matching in each stage.

	In the literature we find the problem \MaxMPM, where the graph is augmented with time-dependent edge weights, and we want to maximize
	the value of each individual perfect matching (subject to the given edge costs) \emph{plus} the total profit~\cite{BELP18}.
	\MIM is the special case where all edge costs are zero, i.e., we only care about the multistage properties of the solution, as long as each stage is perfectly matched.
	There is also the inverse optimization problem \MinMPM, where we minimize the value of each perfect matching \emph{plus} the number of matching edges newly introduced in each stage.
	We have APX-hardness for \MaxMPM and \MinMPM~\cite{BELP18,GTW14}
		(for \MinMPM one may assume a complete graph at each stage, possibly including edges of infinite weight).
	The latter remains APX-hard
	even for spanning 2-stage graphs with bipartite union graph and no edge weights
	(i.e., we only minimize the number of edge swaps)~\cite{BELP18}.
	For uniform edge weights~$0$, the objective of \MinMPM is to minimize $\sum_{i\in\range{\tau -1}} |M_{i+1} \setminus M_i|$;
		similar but slightly different to $\MUM$ (equal up to additive $\sum_{i\in\range{\tau - 1}} n_i/2$).
	For \MinMPM on metric spanning 2- or 3-stage graphs, the authors of~\cite{BELP18} show $3$-approximations.
	They also
		propose a $(1/2)$-approximation for \MaxMPM on spanning temporal
		graphs with any number of stages, which is unfortunately wrong (see Appendix~\ref{appendix:MaxMPM} for a detailed discussion).

	When restricting
		\MaxMPM and \MinMPM to uniform edge weights~$0$,
		optimal solutions for \MIM, \MUM, \MaxMPM, and \MinMPM are identical; thus \MIM and \MUM are NP-hard.
	However, the APX-hardness of \MinMPM does not imply APX-hardness of \MUM
		as the objective functions slightly differ.
	Furthermore, the APX-hardness reduction to \MaxMPM inherently requires non-uniform
		edge weights and does not translate to \MIM.
	To the best of our knowledge, there are no non-trivial approximation algorithms for any of these problems on more than three stages.

	\paragraph{Our contribution.}
	We start with showing in \cref{sec:compl} that the problems are NP-hard even in much more restricted scenarios than previously known,
		and that (a lower bound for) the integrality gap of the natural linear program for \IM 2
		is close to the approximation ratio we will subsequently devise.
	This hints that stronger approximation ratios may be hard to obtain, at least using LP techniques.

	As our main contribution, we propose several approximation algorithms for the multistage problems \MIM and \MUM,
		as well as for their stage-restricted variants, see Fig.~\ref{fig:overview}.
	In particular, in Section~\ref{sec:2im-apx},
		we show a $(1/\!\sqrt{2\mu})$-approximation for \IM 2 and that this analysis is tight.
	Then, in Section~\ref{sec:mim-apx},
		we show that any approximation of \IM 2
		can be used to derive two different approximation algorithms for \MIM,
		whose approximation ratios are a priori incomparable.
	In Section~\ref{sec:mum-apx},
		we further show how to use all these algorithms to approximate \MUM (and \UM 2).
	We also observe that it is infeasible to use an arbitrary \MUM algorithm to approximate \MIM.
	In particular, we propose the seemingly first approximation algorithms for \MIM and \MUM on arbitrarily many stages.
	We stress that our goal is to always guarantee a perfect matching in each stage;
		the approximation ratio deals purely with optimizing the transition costs.
	Recall that approximation algorithms optimizing a weighted sum between intra- and interstage costs cannot guarantee such solutions in general.

	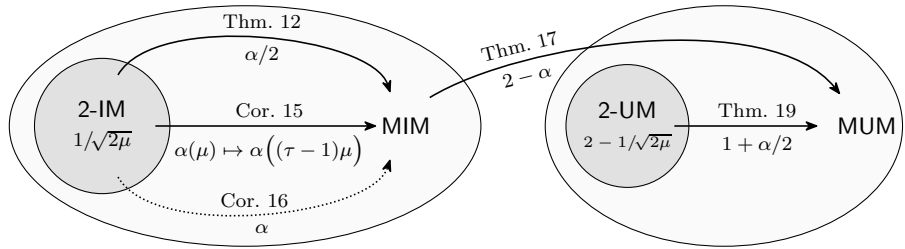
\begin{figure}[tb]
		\centering
		\begin{tikzpicture}[xscale=.95]
			\draw[fill=black!2] (0,0) ellipse[x radius=33mm, y radius=16mm];
			\draw[fill=black!2] (6.7,0) ellipse[x radius=25mm, y radius=16mm];

			\node[minimum width=6mm,circle] (mim) at (2.25,0) {\MIM};
			\node[minimum width=4mm,circle] (mum) at (8.7,0) {\MUM};

			\node[align=center,circle,draw,minimum width=18mm,fill=black!12] (tim) at (-2,0) {\IM 2\\ \smaller $1/\!\sqrt{2\mu}$};
			\node[align=center,circle,draw,minimum width=15mm,fill=black!12] (tum) at (5.35,0) {\UM 2\\ \smaller\smaller $2-1/\!\sqrt{2\mu}$};

			\coordinate (x) at (1.8,1.2);
			\coordinate (y) at (7,1.2);

			\tikzset{every edge/.style={draw,-Stealth[round],semithick}}
			\path
				(tim.70) edge[bend left=65,distance=8mm,shorten <=-2mm] node[below] {\smaller$\alpha / 2$} node[above] {\smaller Thm.~\ref{thm:2IM-to-MIM}} (mim.115)
				(tim) edge[bend right=0,shorten <=-2mm,shorten >=-1mm] node[below] {\smaller$\alpha(\mu) \mapsto \alpha\big((\tau-1)\mu\big)$} node[above] {\smaller Cor.~\ref{cor:better-tim}} (mim)
				(tim.-70) edge[bend right=65,distance=8mm,densely dotted,shorten <=-2mm] node[below] {\smaller $\alpha$} node[above,xshift=-1mm] {\smaller Cor.~\ref{cor:const}} (mim.-115)
				(mim.50) edge[out=30,in=130,distance=15mm] node[below,sloped,pos=0.25] {\smaller $2-\alpha$} node[above,sloped,pos=0.25] {\smaller Thm.~\ref{thm:MIM-to-MUM}} (mum.130)
				(tum) edge[bend right=0,shorten <=-2mm,shorten >=1mm] node[below] {\smaller $1+\alpha / 2$} node[above] {\smaller Thm.~\ref{thm:2UM-to-MUM}} (mum)
			;
		\end{tikzpicture}
		\caption{Relations of our approximation results.\label{fig:overview}
		An arc from problem~$A$ to~$B$ labeled~$f(\alpha)$ denotes the existence of an~$f(\alpha)$-approximation for~$B$, given an $\alpha$-approximation for~$A$.
		In Cor.~\ref{cor:const}, $\alpha$ has to be constant.
		In Cor.~\ref{cor:better-tim}, $\alpha(\cdot)$ is a function of~$\mu$.
		The ratio of \IM 2 is by Thm.~\ref{thm:2IM}; combining this with Thm.~\ref{thm:MIM-to-MUM} yields the ratio for~\UM 2.
		}
	\end{figure}

	\paragraph{Preprocessing and Observations.}
	Given a graph~$G = (V,E)$,
	a single edge~$e$ is \emph{allowed} if there exists a perfect matching~$M$ in~$G$ with $e \in M$ and \emph{forbidden} otherwise.
	A graph is \emph{matching-covered} if all its edges are allowed
		(cf.~\cite{LP86} for a concise characterization of matching-covered graphs).
	Forbidden edges can easily be found in polynomial time; see e.g.~\cite{RV89} for an efficient algorithm.
	A simple preprocessing for \MIM and \MUM is to remove the forbidden edges in each stage,
		as they will never be part of a multistage matching.
	Thereby, we obtain an equivalent \emph{reduced} temporal graph,
		i.e., a temporal graph whose stages are matching-covered.
	If any stage in the reduced temporal graph contains no edges (but vertices), the instance is infeasible.
	In the following, we thus assume w.l.o.g.\ that the given temporal graph is reduced and \emph{feasible}, i.e., in each stage there exists some perfect matching.

	\begin{observation}\label{observation: Trivial solution}
		Let \G be a reduced 2-stage graph. For any $e\in E_\cap$, there is a perfect matching in each stage that includes~$e$. Thus, there is a multistage perfect matching with profit at least $1$ if $E_\cap\neq\varnothing$.
	\end{observation}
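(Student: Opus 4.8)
The plan is simply to unfold the definition of a reduced temporal graph, at which point the statement becomes almost immediate. First I would recall that since \G is reduced, each of its two stages $G_1$ and $G_2$ is \emph{free}, i.e., every edge is \emph{allowed}, meaning it is contained in some perfect matching of its stage. This is the only structural fact about \G that the proof needs.

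Next, I would fix an arbitrary edge $e\in E_\cap = E_1\cap E_2$, which exists by the hypothesis $E_\cap\neq\varnothing$. Because $e\in E_1$ and $G_1$ is free, $e$ is allowed in $G_1$, so there is a perfect matching $M_1$ in $G_1$ with $e\in M_1$; symmetrically, there is a perfect matching $M_2$ in $G_2$ with $e\in M_2$. The key observation is that these two matchings may be selected independently of one another: the only constraint we impose across stages is that each contains the common edge $e$, and reducedness guarantees such a stage-local matching exists regardless of what is chosen in the other stage. I would then set $\M \coloneqq (M_1,M_2)$, which is by construction a multistage perfect matching, and note that $\profit(\M) = |M_1\cap M_2|\geq 1$ since $e\in M_1\cap M_2$.

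There is essentially no obstacle to overcome here; the entire content of the observation is that \emph{allowedness} of a single edge in a free graph is precisely the existence of a perfect matching through that edge, and that this can be invoked separately in each stage for the shared edge $e$. The only point worth stating explicitly is this decoupling between the stages, which is exactly why a common edge of $E_\cap$ can be forced into both matchings simultaneously without creating any conflict.
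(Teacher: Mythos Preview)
Your proposal is correct and matches the paper's intent exactly: the observation is stated without a separate proof in the paper, since the reasoning is immediate from the definition of a reduced temporal graph (each stage is free, so every edge---in particular any $e\in E_\cap$---lies in some perfect matching of its stage). Your write-up simply makes this explicit, which is entirely appropriate.
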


	\begin{observation}\label{observation: Trivial 2-apx for MUM}
		For any multistage perfect matching $(M_i)_{i\in\range{\tau}}$, it holds for each $i\in\range{\tau-1}$ that $\max(n_i/2,n_{i+1}/2) \leq |M_i\cup M_{i+1}| = \uprofit(M_i,M_{i+1}) \leq 2\max(n_i/2,n_{i+1}/2)$.
		Thus, computing any multistage perfect matching is an immediate $2$-approximation for \MUM.
	\end{observation}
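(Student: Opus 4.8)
The plan is to reduce both claims to the elementary counting fact that a perfect matching in a free stage covers exactly $n_i$ vertices and therefore has cardinality $|M_i| = n_i/2$. With this in hand, I would fix an arbitrary index $i \in [\tau-1]$ and bound the single-transition cost $|M_i \cup M_{i+1}|$ from both sides. The lower bound follows immediately because both $M_i$ and $M_{i+1}$ are subsets of $M_i \cup M_{i+1}$, so $|M_i \cup M_{i+1}| \geq \max(|M_i|,|M_{i+1}|) = \max(n_i/2, n_{i+1}/2)$. The upper bound follows from subadditivity of cardinality, $|M_i \cup M_{i+1}| \leq |M_i| + |M_{i+1}| = n_i/2 + n_{i+1}/2 \leq 2\max(n_i/2, n_{i+1}/2)$. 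The equality $|M_i \cup M_{i+1}| = \uprofit(M_i, M_{i+1})$ is merely the definition of the per-transition union cost.

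The crucial point to emphasize is that the quantities $n_i$ are intrinsic to the temporal graph and do \emph{not} depend on the chosen matchings. Hence I would introduce the graph-dependent quantity $L \coloneqq \sum_{i \in [\tau-1]} \max(n_i/2, n_{i+1}/2)$ and observe that summing the per-transition lower bound over all transitions yields $\uprofit(\M) \geq L$ for \emph{every} feasible multistage perfect matching $\M$; in particular this holds for an optimal one, so $\uprofit(\opt) \geq L$. Summing the per-transition upper bound shows that any feasible multistage perfect matching $\apx$ satisfies $\uprofit(\apx) \leq 2L$. Chaining these gives $\uprofit(\apx) \leq 2L \leq 2\,\uprofit(\opt)$, which is exactly the claimed $2$-approximation for the minimization problem \MUM.

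Finally, I would note that feasibility and computability are never at issue: since the temporal graph is reduced, every stage is free and thus admits a perfect matching, so one may simply compute an arbitrary perfect matching in each stage independently (in matching time per stage) to obtain a feasible \apx. There is no substantial obstacle here; the only subtlety worth stating explicitly is that the two bounds must hold uniformly over all multistage perfect matchings, which is precisely what allows the single quantity $L$ to serve simultaneously as a lower bound on \opt and, up to a factor of $2$, as an upper bound on \apx.
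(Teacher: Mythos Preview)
Your proposal is correct and is precisely the elementary argument the paper intends; the observation is stated without proof in the paper because the bounds follow directly from $|M_i|=n_i/2$, the containments $M_i,M_{i+1}\subseteq M_i\cup M_{i+1}$, and subadditivity of cardinality, exactly as you lay out. Your added remark that the single graph-dependent quantity $L$ simultaneously lower-bounds $\opt$ and (up to a factor of~$2$) upper-bounds any feasible $\apx$ makes the $2$-approximation conclusion explicit and is a clean way to phrase it.
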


	\begin{observation}\label{observation: FPT-results}
		Consider the following algorithm:
		Enumerate every possible sequence $(F_i)_{i\in\range{\tau-1}}$ such that $F_i\subseteq E_i\cap E_{i+1}$ for each~$i\in\range{\tau-1}$;
		then check for each $i\in\range{\tau}$
		whether there is a perfect matching $M_i$ in $G_i$
		such that $F_{i-1}\cup F_i\subseteq M_i$, where $F_0 = F_\tau = \varnothing$.
		Thus,
		\MIM and \MUM are in FPT w.r.t.\ parameter $\sum_{i\in\range{\tau-1}} |E_i\cap E_{i+1}|$ (or similarly $\tau\cdot\mu$).
	\end{observation}

	\section{Setting the Ground}\label{sec:compl}
	Before we present our main contribution, the approximation algorithms,
		we motivate the intrinsic complexities of the considered problems.
	On the one hand, we show that the problem is already hard in very restricted cases.
	On the other hand, we show that natural linear programming methods cannot yield a constant-factor approximation for~\IM 2.

	While it is known that \IM 2 is NP-hard in general,
		we show that \IM 2 is already NP-hard
		in the seemingly simple case
		where each vertex has only degree~$2$ in both stages.
	It immediately follows that the decision variants of \MIM, \UM 2, \MUM, \MinMPM, and \MaxMPM remain NP-hard as well, even if restricted to this set of temporal graphs.
	See Appendix~\ref{appendix:hardness} for the proof of the following theorem:

	\newcommand{\thmhardness}{
		Deciding \IM 2 is NP-hard on spanning temporal graphs whose union graph is bipartite, even if both stages
		consist only of disjoint even cycles and $E_\cap$ is a collection of disjoint 2-paths.
	}
	\begin{theorem}\label{thm: 2SPM is NP-hard}
		\thmhardness
	\end{theorem}

	Linear Programs (LPs)---as relaxations of integer linear programs (ILPs)---are often used to provide dual bounds in the approximation context.
	Here, we consider the natural LP-formulation of \IM 2
	and show that the integrality gap (i.e., the ratio between the optimal objective value of the ILP and the optimal objective value of its relaxation) is at least~$\sqrt{\mu}$,
	even already for spanning instances with a bipartite union graph.
	Up to a small constant factor, this equals the (inverse) approximation ratio guaranteed by Algorithm~\ref{algo:Twostage approx}, which we will propose in \cref{sec:apx}.
	This serves as a hint that overcoming the approximation dependency $\sqrt{\mu}$ for \IM 2 may be hard.
	A proof of the following theorem can be found in Appendix~\ref{appendix:ILP}.

	\newcommand{\thmlpgap}{
		The natural LP for \IM 2 has at least an integrality gap of~$\sqrt{\mu}$,
		independent of the number $\mu$ of edges in the intersection.
	}
	\begin{theorem}\label{thm:LP-gap}
		\thmlpgap
	\end{theorem}

	\section{Approximation}\label{sec:apx}
	We start with the special case of \IM2, before extending the result to the multistage \MIM scenario.
	Then we will transform the algorithms for use with \UM2 and \MUM.

	\subsection{Approximating \IM2}\label{sec:2im-apx}
	We first describe Algorithm~\ref{algo:Twostage approx}, which is an approximation for \IM 2.
	Although its ratio is not constant but grows with the rate of $\sqrt{\mu}$,
		\cref{thm:LP-gap} hints that better approximations may be hard to obtain.
	Algorithm~\ref{algo:Twostage approx} roughly works as follows:
	Given a 2-stage graph $\G$, we iterate the following procedure on $G_1$ until every edge of $E_\cap$ has been in at least one perfect matching:
	Compute a perfect matching $M_1$ in $G_1$ that uses the maximum number of edges of $E_\cap$ that have not been used in any previous iteration;
	then compute a perfect matching $M_2$ in $G_2$ that optimizes the profit with respect to $M_1$.
	While doing so, keep track of the maximal occurring profit.
	Note that by choosing weights appropriately, we can construct a perfect matching that contains the maximum number of edges of some prescribed edge set in polynomial time~\cite{LP86}.

	\begin{algorithm}[tb]
		\caption{Approximation of \IM 2 \label{algo:Twostage approx}}
		set $(M_1, M_2)$ to $(\varnothing,\varnothing)$\;
		\For{$i = 1,2,...$}{
			set edge weights of $G_1$ to $\1{e\in E_\cap\setminus\bigcup_{j\in\range{i-1}}M_1^{(j)}}$\;
			compute a maximum weight perfect matching $M_1^{(i)}$ on $G_1$\;
			set edge weights of $G_2$ to $\1{e\in M_1^{(i)}}$\;
			compute a maximum weight perfect matching $M_2^{(i)}$ on $G_2$\;
			\lIf{$|M_1^{(i)}\cap M_2^{(i)}| \geq |M_1\cap M_2|$}{%
				set $(M_1, M_2)$ to $(M_1^{(i)}, M_2^{(i)})$}
			\lIf{$E_\cap \subseteq \bigcup_{j\in\range{i}} M_1^{(j)}$}{%
				\Return{$(M_1,M_2)$}}
		}
	\end{algorithm}

	We show:
	\begin{theorem}\label{thm:2IM}
		Algorithm~\ref{algo:Twostage approx} is a tight $(1/\!\sqrt{2\mu})$-approximation for \IM 2.
	\end{theorem}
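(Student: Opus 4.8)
The plan is to sandwich the algorithm's profit \apx between the obvious upper bound $\opt\le\mu$ and a matching lower bound obtained by balancing two regimes, where the balancing point is what produces the factor $\sqrt{2\mu}$.

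First I would record the upper bound: since any common edge of $M_1$ and $M_2$ lies in $E_1\cap E_2=E_\cap$, every 2-stage perfect matching satisfies $\profit(M_1,M_2)\le|E_\cap|=\mu$, hence $\opt\le\mu$. I would also note the trivial lower bound $\apx\ge 1$ whenever $\opt\ge 1$, which follows from \cref{observation: Trivial solution} and is reproduced by the first iteration of \cref{algo: Twostage approx}: the matching $M_1^{(1)}$ uses at least one edge of $E_\cap$, and since $G_2$ is free, the weighted matching $M_2^{(1)}$ keeps at least one such edge. Moreover the loop terminates and eventually covers all of $E_\cap$, because while an uncovered edge $e\in E_\cap$ remains, freeness of $G_1$ guarantees a perfect matching through $e$, so the maximum-weight $M_1^{(i)}$ has weight at least $1$.

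The heart of the argument is a per-iteration guarantee tied to an optimal solution. Fix $(M_1^\ast,M_2^\ast)$ and let $O\coloneqq M_1^\ast\cap M_2^\ast\subseteq E_\cap$, so $|O|=\opt$. For iteration $i$ let $o_i$ count the edges of $O$ that $M_1^{(i)}$ selects as \emph{fresh} $E_\cap$-edges. Because $M_2^{(i)}$ is a maximum-weight response and $M_2^\ast$ is one admissible response already containing those edges, we get $\pi_i\coloneqq|M_1^{(i)}\cap M_2^{(i)}|\ge|M_1^{(i)}\cap M_2^\ast|\ge o_i$. Since the loop runs until every edge of $E_\cap$ has appeared in some $M_1^{(j)}$, each edge of $O$ is counted freshly exactly once, so $\sum_i o_i=\opt$. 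Writing $k$ for the number of iterations, this already yields $\apx=\max_i\pi_i\ge\opt/k$.

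It remains to control $k$, and this is where I expect the main difficulty. The greedy choice makes the fresh counts $a_i\coloneqq|M_1^{(i)}\cap(E_\cap\setminus\bigcup_{j<i}M_1^{(j)})|$ non-increasing with $\sum_i a_i=\mu$, and $a_1\ge\opt$ because $M_1^\ast$ is an admissible first matching. Intuitively a large optimum forces $M_1^{(1)}$ to consume most of $E_\cap$ at once, leaving few edges and hence few iterations for the remainder, whereas a small optimum is absorbed by $\apx\ge 1$; the intermediate range is the delicate one. Making this quantitative is the crux: one must lower bound how many of the $E_\cap$-edges prescribed inside a single $M_1^{(i)}$ can be \emph{simultaneously retained} by a perfect matching of the free graph $G_2$ (the quantity $\pi_i$ computes), and then trade off ``few iterations with high profit'' against ``many iterations'' via an arithmetic–geometric-mean step. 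I expect this retention bound to be the genuinely hard ingredient — it is the source of both the sublinear $\sqrt{\cdot}$ behaviour and the constant $2$, since a linear (constant-fraction) retention bound would spuriously yield a $(1/2)$-approximation. The clean target to aim for is $\apx\ge\sqrt{\opt/2}$, which together with $\opt\le\mu$ immediately gives $\apx/\opt\ge 1/\sqrt{2\mu}$.
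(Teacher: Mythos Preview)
Your setup is right: with $O\coloneqq M_1^\ast\cap M_2^\ast$ and $o_i$ the number of fresh $O$-edges in $M_1^{(i)}$, you correctly get $\pi_i\ge o_i$ (by plugging in $M_2^\ast$) and $\sum_i o_i=\opt$. The gap is in what you identify as the ``genuinely hard ingredient''. A retention bound of the form you describe---lower-bounding $\pi_i$ by some function of the number of $E_\cap$-edges in $M_1^{(i)}$, using only that $G_2$ is free---does not exist. The integrality-gap instance in \cref{sec: Linear Programming} already shows why: there $M_1^{(1)}$ contains $k{+}1$ edges of $E_\cap$, yet every perfect matching of $G_2$ shares exactly one of them, so $\pi_1=1$ while $a_1=k{+}1$. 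Hence no structural property of free graphs can turn ``many prescribed $E_\cap$-edges'' into ``many retained edges''; the only retention you can certify is precisely $o_i$, which you already have.

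The missing idea is not about $G_2$ at all, but about re-using $M_1^\ast$ as a candidate in \emph{every} iteration, not just the first. In iteration $i$ the weight of $M_1^\ast$ is at least the number of $O$-edges not yet covered, so $a_i\ge\opt-\sum_{j<i}o_j$. Now do a case split with $q\coloneqq\sqrt{2\mu}$: if some $o_i\ge\opt/q$ you are done since $\apx\ge o_i$; if $\opt\le q$ you are done since $\apx\ge 1$. Otherwise every $o_j<\opt/q$, whence $a_i>\opt\,(1-(i-1)/q)$ for all $i$, and in particular (since then $k\ge\lceil q\rceil$) summing the first $\lceil q\rceil$ of these lower bounds already exceeds $\mu=\sum_i a_i$, a contradiction. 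This is where the $\sqrt{2\mu}$ comes from---a triangular sum, not an AM--GM step---and it never touches the structure of $G_2$ beyond the single use of $M_2^\ast$ you already made.
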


	\noindent We prove this via two Lemmata;
	the bad instance of Lemma~\ref{lemma:2IM-tightness} in conjunction with the approximation guarantee (Lemma~\ref{lemma:2IM-guarantee}) establishes tightness.

	\newcommand{\lemmatightness}{The approximation ratio of Algorithm~\ref{algo:Twostage approx} is at most~$(1/\!\sqrt{2\mu})$.}
	\begin{lemma}[Bad instance]\label{lemma:2IM-tightness}
		\lemmatightness\ (Proof in Appendix~\ref{appendix:tightness}.)
	\end{lemma}

	\begin{lemma}[Guarantee]\label{lemma:2IM-guarantee}
		The approximation ratio of Algorithm~\ref{algo:Twostage approx} is at least~$(1/\!\sqrt{2\mu})$.
	\end{lemma}

	\begin{proof}\let\qed\relax
		Let $\G$ be a feasible and reduced 2-stage graph with non-empty $E_\cap$.
		Clearly, our algorithm achieves $\apx\geq 1$ as described in Observation~\ref{observation: Trivial solution}.
		Let~$k$ denote the number of iterations.
		For any~$i\in\range{k}$, let $(M_1^{(i)},M_2^{(i)})$ denote the 2-stage perfect matching computed in the $i$th iteration.
		The algorithm picks at least one new edge of~$E_\cap$ per iteration into $M_1^{(i)}$ and hence terminates.
		Let~$(M^*_1,M^*_2)$ denote an optimal 2-stage perfect matching and $M^*_\cap\coloneqq M^*_1\cap M^*_2$ its intersection.
		Let~$R_i \coloneqq (M_1^{(i)} \cap E_\cap)\setminus \bigcup_{j\in\range{i-1}}R_j$
			denote the set of edges in $M_1^{(i)}\cap E_\cap$ that are not contained in~$M_1^{(j)}$ for any previous iteration~$j < i$
			and let~$r_i \coloneqq |R_i|$.
		Note that in iteration~$i$, the algorithm first searches for a perfect matching~$M_1^{(i)}$ in $G_1$ that maximizes the cardinality~$r_i$ of its intersection with~$E_\cap\setminus\bigcup_{j\in\range{i-1}}R_j$.
		We define
			$R^*_i \coloneqq (M_1^{(i)} \cap M^*_\cap)\setminus \bigcup_{j\in\range{i-1}}R^*_j$
			and $r^*_i\coloneqq|R^*_i|$
			equivalently to~$R_i$, but w.r.t.~$M^*_\cap$
			(cf.~\cref{fig:m-and-r}).
		Observe that $R_i\cap M^*_\cap = R^*_i$.
		\begin{figure}[t]
			\centering
			\begin{tikzpicture}[text height=12pt,xscale=.9,yscale=.7]
	\definecolor{mygreen}{rgb}{0, 0.75, 0.3}
	\newcommand{\yelloworange}{yellow!50!orange}
	\newcommand{\lightgray}{gray!15}
	\newcommand{\darkgray}{gray!35}

	\useasboundingbox[blue] (0,.75) rectangle (10,3.75);
	\tikzset{box/.style={rounded corners,thick}}

	\tikzset{firstBoxOuter/.style={box,mygreen!20}}
	\tikzset{firstBox/.style={box,mygreen!50}}
	\tikzset{firstBoxInner/.style={box,mygreen!80}}

	\tikzset{secondBoxOuter/.style={box,red!20}}
	\tikzset{secondBox/.style={box,red!50}}
	\tikzset{secondBoxInner/.style={box,red!80}}

	\tikzset{thirdBoxOuter/.style={box,\yelloworange!20}}
	\tikzset{thirdBox/.style={box,\yelloworange!50}}
	\tikzset{thirdBoxInner/.style={box,\yelloworange}}

	\newcommand{\eCapBox}{(0, 0) rectangle (10, 3.75)}
	\newcommand{\mCapBox}{(2, 0) rectangle (8, 3)}
	\newcommand{\firstMBox}{(-1.5,0.25) rectangle (6,2)}
	\newcommand{\secondMBox}{(4,1) rectangle (7,4.5)}
	\newcommand{\thirdMBox}{(4.25,.5) rectangle (11.5,1.75)}

	\fill[box,\lightgray] \eCapBox;
	\fill[box,\darkgray] \mCapBox;

	\begin{scope}
		\fill[thirdBoxOuter] \thirdMBox;
		\clip \eCapBox;
		\fill[thirdBox] \thirdMBox;
		\clip \mCapBox;
		\fill[thirdBoxInner] \thirdMBox;
	\end{scope}

	\begin{scope}
		\clip \eCapBox;
		\fill[box,\lightgray] \secondMBox;
		\clip \mCapBox;
		\fill[box,\darkgray] \secondMBox;
	\end{scope}
	\begin{scope}
		\fill[secondBoxOuter] \secondMBox;
		\clip \eCapBox;
		\fill[secondBox] \secondMBox;
		\clip \mCapBox;
		\clip \secondMBox;
		\fill[secondBoxInner] \secondMBox;
	\end{scope}

	\begin{scope}
		\clip \eCapBox;
		\fill[box,\lightgray] \firstMBox;
		\clip \mCapBox;
		\fill[box,\darkgray] \firstMBox;
	\end{scope}
	\begin{scope}
		\fill[firstBoxOuter] \firstMBox;
		\clip \eCapBox;
		\fill[firstBox] \firstMBox;
		\clip \mCapBox;
		\clip \firstMBox;
		\fill[firstBoxInner] \firstMBox;
	\end{scope}

	\draw[box] \thirdMBox;
	\draw[box] \secondMBox;
	\draw[box] \firstMBox;
	\draw[box] \mCapBox;
	\draw[box] \eCapBox;

	\tikzset{label/.style={rectangle,rounded corners,fill=gray!1,inner sep=1pt,fill opacity=.8,text opacity=1}}
	\contourlength{0.06em}
	\newcommand{\xOffset}{0.33}
	\newcommand{\yOffset}{-0.28}
	% E_\cap
	\node[] at($(0,3.75)+(\xOffset,\yOffset)$) {\contour{white}{$E_\cap$}};
	% M^*_\cap
	\node[] at ($(2,3)+(\xOffset,\yOffset)$) {\contour{white}{$M^*_\cap$}};
	% M^1_1
	\node[] at ($(-1.5,2)+(.15+\xOffset,\yOffset)$) {\contour{white}{$M^{(1)}_1$}};
	\node[] at ($(0,2)+(\xOffset,\yOffset)$) {\contour{white}{$R_1$}};
	\node[] at ($(2,2)+(\xOffset,\yOffset)$) {\contour{white}{$R^{*}_1$}};
	% M^2_1
	\node[] at ($(4,4.5)+(.15+\xOffset,\yOffset)$) {\contour{white}{$M^{(2)}_1$}};
	\node[] at ($(4,3.75)+(\xOffset,\yOffset)$) {\contour{white}{$R_2$}};
	\node[] at ($(4,3)+(\xOffset,\yOffset)$) {\contour{white}{$R^{*}_2$}};
	% M^3_1
	\node[] at ($(11.5,1.75)+(-.15-\xOffset,\yOffset)$) {\contour{white}{$M^{(3)}_1$}};
	\node[] at ($(10,1.75)+(-\xOffset,\yOffset)$) {\contour{white}{$R_3$}};
	\node[] at ($(8,1.75)+(-\xOffset,\yOffset)$) {\contour{white}{$R^{*}_3$}};
\end{tikzpicture}%
			\caption{
				Visualization of the relationships between $E_\cap, M^*_\cap, M^{(i)}_1, R_i$ and $R^*_i$ for $i\in\range{3}$.\label{fig:m-and-r}
			}
		\end{figure}
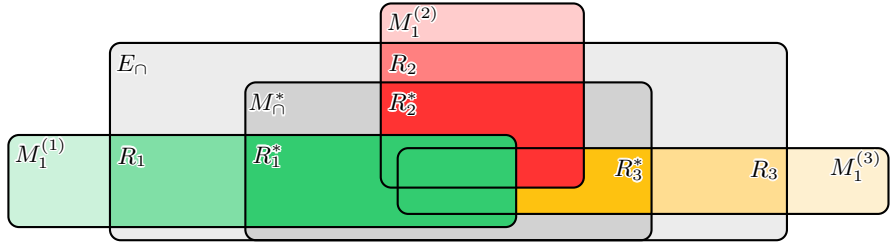

		Let~$q\coloneqq \sqrt{2\mu}$.
		For every $i\in\range{k}$ the algorithm chooses $M_2^{(i)}$ such that $|M_1^{(i)}\cap M_2^{(i)}|$ is maximized.
		Since we may choose $M_2^{(i)} = M^*_2$, it follows that $\apx \geq \max_{i\in\range{k}} r_i^*$.
		Thus, if $\max_{i\in\range{k}} r_i^* \geq \opt/q$, we have a $(1/q)$-approximation.
		In case $\opt\leq q$, any solution with profit at least~$1$ yields a $(1/q)$-approximation.
		We show that we are in one of these cases.

		Let $\qceil\coloneqq\lceil q\rceil$.
		Assume that $\opt > q$ (thus $\opt\geq \qceil$) and simultaneously $r_i^* < \opt/q$ for all $i\in\range{k}$.
		Since we distribute~$M^*_\cap$ over the disjoint sets $\{R^*_i\mid i\in\range{k}\}$, each containing less than $\opt/q$ edges, we know that $k > q$ (thus $k\geq\qceil$).
		In iteration $i$, $M^*_1$ has weight
			$|(M^*_1 \cap E_\cap) \setminus \bigcup_{j\in\range{i-1}}R_j|
			\geq |M^*_\cap \setminus \bigcup_{j\in\range{i-1}}R_j|
			= |M^*_\cap \setminus \bigcup_{j\in\range{i-1}}R^*_j|$.
		Hence, the latter term is a lower bound on~$r_i$, that we estimate as follows:
		$r_i\geq  \big|M^*_\cap\setminus \bigcup_{j\in\range{i-1}}R^*_j\big|
			= \opt -\sum_{j\in\range{i-1}}r^*_j \geq \opt - \sum_{j\in\range{i-1}} \opt/q
			= \opt \cdot\big(1 - (i - 1)/q\big)$.
		The above assumptions give a contradiction:%
		\begin{linenomath}
			{\setlength{\mathindent}{0cm}
			\begin{align*}
				\mu= \big|\bigcup_{i\in\range{k}} R_i\big|
					\geq
					\sum_{i\in\range{\qceil}} r_i
					\geq \opt \cdot \sum_{i\in\range{\qceil}} \big(1 - \frac{i - 1}{q}\big)
					\geq \qceil \cdot \sum_{i\in\range{\qceil}} \big(1 - \frac{i - 1}{q}\big)\\
					= \qceil \big(\qceil - \sum_{i\in\range{\qceil -1}}\frac iq\big)
					= \qceil \big(\qceil - \frac{(\qceil -1)\qceil}{2q}\big)
					= \qceil^2 \big(1 - \frac{\qceil -1}{2q}\big)
					\boldsymbol> \qceil^2 \big(1 - \frac{q}{2q}\big)
					\geq \mu.~\hfill\qedsymbol
			\end{align*}
			}
		\end{linenomath}
	\end{proof}

	\subsection{Approximating \MIM}\label{sec:mim-apx}
	Let us extend the above result to an arbitrary number of stages.
	We show that we can use \emph{any} \IM 2 approximation algorithm (in particular also Algorithm~\ref{algo:Twostage approx})
		as a black box to obtain an approximation algorithm for \MIM,
		while only halving the approximation ratio:
	Algorithm~\ref{algo:sequence-apx} uses an edge weighted path $(P,w)$ on $\tau$ vertices as an auxiliary graph.
	We set the weight of the edge between the $i$th and $(i+1)$th vertex
		to an approximate solution for the \IM 2 instance
		that arises from the $i$th and $(i+1)$th stage of the \MIM instance.
	A maximum weight matching $M_P$ in~$(P,w)$ induces a feasible solution for the \MIM problem:
	If an edge $(j,j+1)$ is in $M_P$,
		we use the corresponding solutions for the $j$th and $(j+1)$th stage;
		for stages without incident edge in~$M_P$,
		we select an arbitrary solution.
	Since no vertex is incident to more than one edge in~$M_P$,
		there are no conflicts.

	\begin{observation}\label{lem: Path matching}
		For~$F \subseteq E(P)$, denote $w(F) \coloneqq \sum_{e \in F} w(e)$.
		Let $e_i$ denote the $i$th edge of $P$.
		For $b\in\range{2}$, the matchings $M_b \coloneqq \{e_i\in E(P)\mid i=b\mod 2\}$
		are disjoint and their union is exactly $E(P)$.
		Thus, any maximum weight matching $M_P$ in $P$ achieves $2\cdot w(M_P)\geq w(E(P))$.
	\end{observation}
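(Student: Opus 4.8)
The plan is to first check that $M_1$ and $M_2$ are genuine matchings that partition the edge set $E(P)$, and then to deduce the weight bound by a one-line averaging argument.

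For the structural part, I would use the fact that in a path the only pairs of edges sharing a vertex are the consecutive pairs $e_i,e_{i+1}$. Since $e_i\in M_b$ forces $i\equiv b\pmod 2$, two edges of the same $M_b$ have indices of equal parity and thus differ by at least $2$, so they are never consecutive; hence each $M_b$ is indeed a matching. Moreover every edge $e_i$ has exactly one index, whose parity is either that of $b=1$ or that of $b=2$, so $e_i$ lies in exactly one of $M_1,M_2$. This gives $M_1\cap M_2=\varnothing$ and $M_1\cup M_2=E(P)$, as claimed.

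For the weight bound I would exploit that this partition yields $w(M_1)+w(M_2)=w(E(P))$. Because $M_P$ is a \emph{maximum} weight matching in $P$ and both $M_1,M_2$ are feasible matchings, we have
$w(M_P)\geq \max\{w(M_1),w(M_2)\}\geq \frac{1}{2}\big(w(M_1)+w(M_2)\big)=\frac{1}{2}\,w(E(P))$,
which rearranges to $2\cdot w(M_P)\geq w(E(P))$.

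The statement is elementary and I do not anticipate a real obstacle; the only point meriting minor care is the parity bookkeeping (confirming that $b\bmod 2$ for $b\in[2]$ really selects the odd- and even-indexed edges and that same-parity indices never give adjacent edges). It is worth noting that the averaging step needs no sign assumption on $w$: it relies solely on $M_1,M_2$ being feasible matchings compared against the optimum $M_P$, so the bound holds for arbitrary real edge weights, which is exactly the generality needed when the observation is later applied with $\{0,1\}$-weights coming from indicator functions.
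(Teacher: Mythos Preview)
Your proposal is correct and matches the paper's intended reasoning: the paper treats this as an observation with the justification embedded in the statement itself (the two parity classes form disjoint matchings whose union is $E(P)$), and you have simply spelled out the averaging step $w(M_P)\geq\max\{w(M_1),w(M_2)\}\geq\tfrac12\,w(E(P))$ that the paper leaves implicit.
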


	\begin{algorithm}[t]
		\caption{General multistage approximation \label{algo:sequence-apx}}

		\KwIn{%
			Temporal graph~\G,
			2-stage perfect matching algorithm~$\mathcal A$
		}
		\BlankLine
		create path $P\coloneqq \{e_1,...,e_{\tau-1}\}$\;
		\ForEach{$i\in\range{\tau-1}$}{
			set $(S_i,T_{i+1})$ to $\mathcal A(V, E_i, E_{i+1})$
			\Comment*{approximate 2-stage graphs}
			set weight of $e_i$ to $w_i\coloneqq |S_i\cap T_{i+1}|$%
		}
		compute maximum weight matching $M_P$ in $(P,w)$\;
		set $(M_i)_{i\in\range{\tau-1}}$ to $(S_i)_{i\in\range{\tau-1}}$ and $M_\tau$ to $T_\tau$
		\Comment*{set initial solution}
		\ForEach(\tcp*[f]{modify solution according to $M_P$}){$i\in\range{\tau-1}$}{
			\lIf{$e_{i}\in M_P$}{%
				set $M_i$ to $S_i$ and $M_{i+1}$ to $T_{i+1}$}%
		}
		\Return{$(M_1,...,M_\tau)$}
	\end{algorithm}

	\begin{theorem}\label{thm:2IM-to-MIM}
		For a \IM 2 $\alpha$-approximation, Alg.\,\ref{algo:sequence-apx} $\frac\alpha2$-approximates \MIM.
	\end{theorem}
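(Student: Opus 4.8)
The plan is to argue feasibility of the returned sequence, bound its profit from below by the weight of the path matching $M_P$, and then chain the per-transition guarantee of $\mathcal A$ with a summation bound relating the local optima to $\opt$. First I would check that $(M_1,\dots,M_\tau)$ is a valid multistage perfect matching: every $M_j$ is set either to $S_j$ (a perfect matching in $G_j$ produced by the call $\mathcal A(V,E_j,E_{j+1})$) or to $T_j$ (a perfect matching in $G_j$ produced by $\mathcal A(V,E_{j-1},E_j)$). Since $M_P$ is a matching in the path $P$, at most one of the path edges $e_{j-1},e_j$ is selected, so the two assignment rules in the final loop never conflict on a common stage $j$, while the initial assignment fills in the remaining stages. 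Hence each $M_j$ is a perfect matching in $G_j$ and $\M$ is feasible.

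Next I would lower-bound the profit. Whenever $e_i\in M_P$, the final loop sets $M_i\coloneqq S_i$ and $M_{i+1}\coloneqq T_{i+1}$, so transition $i$ contributes exactly $|M_i\cap M_{i+1}|=|S_i\cap T_{i+1}|=w(e_i)$ to $\profit(\M)$, whereas every other transition contributes at least $0$. Thus $\apx=\profit(\M)\ge\sum_{e_i\in M_P} w(e_i)=w(M_P)$. Applying \cref{lem: Path matching} to the maximum-weight matching $M_P$ in $P$ gives $2\,w(M_P)\ge w(E(P))=\sum_{i\in[\tau-1]}|S_i\cap T_{i+1}|$, so $\apx\ge\frac12\sum_{i\in[\tau-1]}|S_i\cap T_{i+1}|$. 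Let $\opt_i$ denote the optimal \IM 2 value on the pair $(V,E_i,E_{i+1})$; the $\alpha$-approximation guarantee of $\mathcal A$ yields $|S_i\cap T_{i+1}|\ge\alpha\,\opt_i$ for each $i$. Finally, if $(M_1^*,\dots,M_\tau^*)$ is an optimal \MIM solution, then its restriction $(M_i^*,M_{i+1}^*)$ is feasible for the $i$th 2-stage instance, so $|M_i^*\cap M_{i+1}^*|\le\opt_i$ and therefore $\opt=\sum_{i\in[\tau-1]}|M_i^*\cap M_{i+1}^*|\le\sum_{i\in[\tau-1]}\opt_i$. Combining the three inequalities gives $\apx\ge\frac\alpha2\sum_i\opt_i\ge\frac\alpha2\,\opt$, as claimed.

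The genuinely delicate point is the factor $\frac12$ and why it is forced. One cannot simply concatenate all pairs $(S_i,T_{i+1})$ into a single sequence, because consecutive calls disagree on their shared stage in general ($T_{i+1}$ need not equal $S_{i+1}$); committing stage $i{+}1$ to $T_{i+1}$ is incompatible with committing it to $S_{i+1}$. The path $P$ encodes exactly this conflict, each vertex being a stage and selecting $e_i$ meaning ``trust the $i$th call on both its endpoints'', so the set of transitions whose guaranteed profit we can actually realize must form a matching in $P$. \cref{lem: Path matching} then shows that a maximum-weight such matching captures at least half of the total path weight, which is the sole source of the $\frac12$ loss. I would expect the only part requiring careful bookkeeping to be the feasibility argument (that no stage ever receives two conflicting matchings), since everything else is a direct estimate.
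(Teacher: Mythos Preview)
Your proof is correct and follows essentially the same route as the paper: bound $\profit(\M)\ge w(M_P)$, invoke \cref{lem: Path matching} to get $2\,w(M_P)\ge\sum_i w_i$, and use the per-transition $\alpha$-guarantee of $\mathcal A$ to relate $\sum_i w_i$ to $\opt$. The paper omits the feasibility discussion and the ``why $\tfrac12$ is forced'' paragraph you add, and it re-indexes the second output as $T_i$ rather than $T_{i+1}$ in its proof, but these are cosmetic differences only.
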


	\begin{proof}
		Let $\G = (V, E_1,..., E_\tau)$ be the given temporal graph.
		For any $i\in\range{\tau-1}$, $(S_i,T_{i+1})$ is the output of the \IM2 $\alpha$-approximation $\mathcal A(V, E_i, E_{i+1})$; let~$w_i\coloneqq|S_i\cap T_{i+1}|$.
		Let $\M^*\coloneqq(M_1^*,...,M_\tau^*)$ denote a multistage perfect matching whose profit $\profit(\M^*)$ is maximum.
		Since $\mathcal A$ is an $\alpha$-approximation for \IM 2, we know that $|M_i^*\cap M_{i+1}^*|\leq w_i/\alpha$ for every $i\in\range{\tau -1}$.
		Thus~$\profit(\M^*)\leq (1/\alpha) \sum_{i\in\range{\tau -1}} w_i$.
		Algorithm~\ref{algo:sequence-apx} computes a maximum weight matching $M_P$ in~$(P,w)$ and constructs a multistage solution~$\M$.
		By Observation~\ref{lem: Path matching}, we obtain $\profit(\M^*)\leq (1/\alpha)\sum_{i\in\range{\tau -1}} w_i = (1/\alpha)\cdot  w(E(P)) \leq (2/\alpha)\cdot w(M_P) \leq (2/\alpha)\cdot \profit(\M)$.
	\end{proof}

		We compute a maximum weight matching in a path in linear time using straightforward dynamic programming.
		Hence, assuming running time~$f$ for~$\mathcal A$, Algorithm~\ref{algo:sequence-apx} requires
		$\mathcal O\big(\sum_{i\in \range{\tau-1}}|f(G_i,G_{i+1})|\big)$ steps.

	\begin{corollary}\label{cor:MIM-approx}
		Alg.\,\ref{algo:Twostage approx} in Alg.\,\ref{algo:sequence-apx} yields a $(1/\!\sqrt{8\mu})$-approximation for \MIM.
	\end{corollary}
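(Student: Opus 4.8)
The plan is to compose the two preceding results, \cref{thm:2IM} and \cref{thm:2IM-to-MIM}, after checking that their parameters are compatible. \cref{algo:sequence-apx} invokes its 2-stage subroutine~$\mathcal A$ on each consecutive pair $(V, E_i, E_{i+1})$ independently, so I would first argue that \cref{algo: Twostage approx} is a legitimate choice of~$\mathcal A$ carrying a single, uniform approximation guarantee across all $\tau-1$ transitions.

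The one subtlety to address is that \cref{thm:2IM} states a ratio of $1/\sqrt{2\mu}$ where, for a genuine 2-stage instance, $\mu = |E_1\cap E_2|$, whereas in the multistage setting $\mu = \max_{i\in[\tau-1]} |E_i\cap E_{i+1}|$. Fixing a transition~$i$ and writing $\mu_i \coloneqq |E_i\cap E_{i+1}|$, \cref{thm:2IM} gives that \cref{algo: Twostage approx} applied to $(V, E_i, E_{i+1})$ achieves ratio $1/\sqrt{2\mu_i}$. Since $\mu_i \le \mu$ by definition and $x\mapsto 1/\sqrt{2x}$ is decreasing, this ratio is at least $1/\sqrt{2\mu}$; as \MIM is a maximization problem, a larger ratio is only better, so \cref{algo: Twostage approx} qualifies as a $(1/\sqrt{2\mu})$-approximation on every transition simultaneously. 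This is exactly the uniformity that the proof of \cref{thm:2IM-to-MIM} relies on when bounding $|M_i^*\cap M_{i+1}^*| \le w_i/\alpha$ with a common~$\alpha$ for all~$i$.

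With $\alpha = 1/\sqrt{2\mu}$ thus established, I would invoke \cref{thm:2IM-to-MIM} directly: \cref{algo:sequence-apx} with this choice of~$\mathcal A$ is an $\frac\alpha2$-approximation for \MIM. The claim then follows from the elementary identity
\[
	\frac\alpha2 = \frac{1}{2\sqrt{2\mu}} = \frac{1}{\sqrt{4\cdot 2\mu}} = \frac{1}{\sqrt{8\mu}}.
\]

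I do not anticipate a genuine obstacle: the corollary is essentially a substitution into \cref{thm:2IM-to-MIM}, and the only point requiring care is the reconciliation between the per-transition parameter~$\mu_i$ and the global~$\mu$, which the monotonicity argument above settles cleanly.
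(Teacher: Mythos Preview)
Your proposal is correct and matches the paper's approach: the corollary is stated there without proof, as an immediate combination of \cref{thm:2IM} and \cref{thm:2IM-to-MIM}. Your explicit reconciliation of the per-transition parameter~$\mu_i$ with the global~$\mu$ via monotonicity is a detail the paper leaves implicit, but it is exactly the right justification.
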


	There is another way to approximate \MIM via an approximation for \IM 2, which neither dominates nor is dominated by the above method:

	\begin{theorem}\label{thm:1-2-1}
		There is an S-reduction from \MIM to \IM 2, i.e.,
		given any \MIM instance~$\G$, we can find a corresponding \IM 2 instance~$\G'$ in polynomial time
		such that any solution for~$\G$ bijectively corresponds to a solution for~$\G'$ with the same profit.
		Furthermore, $|E(G'_1) \cap E(G'_2)| = \sum_{i \in \range{\tau-1}} |E(G_i) \cap E(G_{i+1})|$.
	\end{theorem}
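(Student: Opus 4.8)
The plan is to \emph{unfold} the $\tau$ stages onto a single vertex set using one private copy $V^{(i)}$ of $V$ per stage, and to pack the stages into the two stages $G'_1,G'_2$ of $\G'$ according to a proper $2$-colouring of the transition path $1\!-\!2\!-\!\cdots\!-\!\tau$ (i.e.\ by parity). Writing $\sigma(i)\in\{1,2\}$ for the colour of stage $i$, I would encode the choice of $M_i$ entirely inside stage $G'_{\sigma(i)}$, on the copy $V^{(i)}$. This parity split is essential: a single perfect matching $M'_b$ carries its information only linearly, whereas $p(\M)=\sum_i|M_i\cap M_{i+1}|$ is \emph{bilinear} in the stage matchings; placing consecutive stages in opposite colours lets each term $|M_i\cap M_{i+1}|$ become a genuine $M'_1\times M'_2$ interaction, realised by edges shared between the two stages of $\G'$.

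Concretely, I would process each edge $f=uv\in E_i$ separately. If $f$ lies in neither $E_{i-1}$ nor $E_{i+1}$ it is \emph{private}; I place it as the plain edge $u^{(i)}v^{(i)}$ in $G'_{\sigma(i)}$ only. If $f$ is \emph{common}, i.e.\ shared with a temporal neighbour, I replace it by a \emph{transition gadget} carrying a dedicated \emph{shared} edge $g\in E(G'_1)\cap E(G'_2)$. For the pair $(i,i+1)$ the gadget must satisfy $g\in M'_{\sigma(i)}\iff f\in M_i$ on the stage-$i$ side and $g\in M'_{\sigma(i+1)}\iff f\in M_{i+1}$ on the stage-$(i+1)$ side; since $\sigma(i)\ne\sigma(i+1)$, this yields $g\in M'_1\cap M'_2\iff f\in M_i\cap M_{i+1}$. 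The two sides share only the endpoints of $g$; each side additionally owns two auxiliary vertices, matched internally when $f$ is selected and parked by a forced edge in the opposite stage otherwise. One $f$ can be common with \emph{both} neighbours ($f\in E_{i-1}\cap E_i\cap E_{i+1}$), so the stage-$i$ side must drive \emph{two} shared edges from the single bit ``$f\in M_i$''; for this I would chain two copies of the basic gadget in series, so that consuming $u^{(i)},v^{(i)}$ forces both shared edges simultaneously (a fan-out of arity at most $2$, which always suffices because $f^{(i)}$ has only two temporal neighbours).

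With the gadget in hand the bijection is immediate: a multistage matching $(M_1,\dots,M_\tau)$ maps to the $2$-stage matching that, on each copy $V^{(i)}$, uses the plain edge or the ``selected'' gadget mode exactly according to $M_i$, with all internal and parking edges forced; conversely, reading off for each $u^{(i)}$ the unique incident edge of $M'_{\sigma(i)}$ recovers $M_i$. The key verification is that every gadget admits \emph{exactly two} perfect-matching modes --- ``$u^{(i)},v^{(i)}$ consumed, shared edge(s) in'' versus ``$u^{(i)},v^{(i)}$ free, shared edge(s) out'' --- with no mixed or partial mode; this makes the two maps mutually inverse and shows that $(M'_1,M'_2)$ is a $2$-stage perfect matching iff $M_1,\dots,M_\tau$ are. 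The profit identity then follows since the shared edges are the only edges of $E(G'_1)\cap E(G'_2)$, so $|M'_1\cap M'_2|=\sum_{i}|\{f\in E_i\cap E_{i+1}: f\in M_i\cap M_{i+1}\}|=\sum_i|M_i\cap M_{i+1}|=p(\M)$, and counting one shared edge per (transition, common edge) gives $|E(G'_1)\cap E(G'_2)|=\sum_{i\in[\tau-1]}|E_i\cap E_{i+1}|$.

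The main obstacle I anticipate is the gadget design itself. The naive realisation --- replacing $f$ by a path $u^{(i)}\!-\!a\!-\!b\!-\!v^{(i)}$ with $g=ab$ --- gives the \emph{complement} relation $g\in M'\iff f\notin M$, which corrupts the profit by an uncontrolled (not merely additive) amount and so cannot be fixed by a global offset. Obtaining the \emph{direct} relation $g\in M'\iff f\in M$ (the interior edge is matched precisely when its endpoints are consumed) together with the arity-$2$ fan-out, and then proving rigorously that no spurious ``half-selected'' mode exists, is where the real work lies; everything else is bookkeeping over disjoint gadgets.
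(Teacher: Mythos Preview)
Your overall architecture---parity-split the stages into two colours, encode each $M_i$ inside $G'_{\sigma(i)}$, and realise each term $|M_i\cap M_{i+1}|$ by a shared edge between the two stages---is exactly the paper's approach. But the proof as written has a genuine gap: you never actually construct the gadget. You correctly diagnose that the $3$-path subdivision yields the complemented relation, and you announce that ``obtaining the direct relation\ldots together with the arity-$2$ fan-out\ldots is where the real work lies'', and then stop. The sketch in your second paragraph (``two auxiliary vertices\ldots parked by a forced edge in the opposite stage'') is too vague to check and, read literally, conflates the two stages: a perfect matching of $G'_{\sigma(i)}$ cannot ``park'' a vertex using an edge that lives in $G'_{\sigma(i+1)}$.

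The paper closes this gap with a single device that also eliminates your case analysis: replace \emph{every} edge $e\in E_i$ (private or common) by a $7$-path $p_i^e$ inside $G'_{\sigma(i)}$, and label its $3$rd and $5$th edges $e_i^-$, $e_i^+$. Because the six internal vertices of $p_i^e$ have degree~$2$ in $G'_{\sigma(i)}$, any perfect matching either contains edges $1,3,5,7$ of $p_i^e$ (``$e$ selected'') or edges $2,4,6$ (``$e$ not selected''); in particular $e_i^-$ and $e_i^+$ are both in the matching iff $e$ is selected---the direct relation, with fan-out~$2$ for free. Then, for each $e\in E_i\cap E_{i+1}$, identify the two endpoints of $e_i^+$ with those of $e_{i+1}^-$; since $\sigma(i)\neq\sigma(i+1)$ this creates exactly one shared edge per (transition, common edge) pair, no other edges are shared, and the bijection and profit identity follow at once. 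No separate handling of private edges, no chaining of gadgets, no parking.
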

	\begin{proof}
		We will construct a 2-stage graph~$\G'$ whose first stage~$G'_1$ consists of (subdivided) disjoint copies
		of $G_i$ for odd~$i$;
			conversely its second stage~$G'_2$ consists of (subdivided) disjoint copies of $G_i$ for even~$i$.
		More precisely, consider the following construction:
		Let~$b(i) \coloneqq 2 - (i \bmod 2)$.
		For each $i \in \range{\tau}$, we create a copy of $G_i$ in $G'_{b(i)}$
			where each edge~$e \in E(G_i)$ is replaced by a $7$-path~$p_i^e$.
		We label the $3$rd ($5$th) edge along~$p_i^e$ (disregarding its orientation) with $e_i^-$ ($e_i^+$, respectively).
		To finally obtain $\G'$, for each $i \in \range{\tau-1}$ and $e \in E(G_i) \cap E(G_{i+1})$,
			we identify the vertices of $e_i^+$ with those of $e_{i+1}^-$ (disregarding the edges' orientations);
			thereby precisely the edges $e_i^+$ and $e_{i+1}^-$ become an edge common to both stages.
		No other edges are shared between both stages.
		This completes the construction of~$\G'$
			and we have $|E(G'_1) \cap E(G'_2)| = \sum_{i \in \range{\tau-1}} |E(G_i) \cap E(G_{i+1})|$.

		Assume~$\M' \coloneqq (M'_1, M'_2)$ is a solution for~$\G'$.
		Clearly, each path~$p_i^e$ in $G'_{b(i)}$ is matched alternatingly and hence
			either all or none of $e_i^-,e_i^+$, the first, and the last edge of $p_i^e$ are in~$M'_{b(i)}$.
		We derive a corresponding solution~$\M$ for $\G$:
		For every $i \in \range{\tau}$ and $e \in E(G_i)$,
			we add $e$ to $M_i$ if and only if $e_i^- \in M'_{b(i)}$.
		Suppose that $M_i$ is not a perfect matching for $G_i$,
			i.e., there exists a vertex~$v$ in $G_i$
			that is not incident to exactly one edge in~$M_i$.
		Then also the copy of~$v$ in the copy of $G_i$ in $G'_{b(i)}$ is not incident to
			exactly one edge of $M'_{b(i)}$, contradicting the feasibility of~$\M'$.

		Consider the profit achieved by $\M$:
		Every edge in $M'_1\cap M'_2$ corresponds to a different identification $\langle e_i^+, e_{i+1}^-\rangle$.
		We have $e \in M_i \cap M_{i+1}$ if and only if $e_{i}^- \in M'_{b(i)}$, $e_{i+1}^- \in M'_{b(i+1)}$, and $e_i^+ = e_{i+1}^-$.
		It follows that this holds if and only if $e_{i}^+ \in M'_{b(i)} \cap M'_{b(i+1)}$
			and hence the profit of~$\M$ is equal to that of~$\M'$.
		The inverse direction proceeds in the same manner.
	\end{proof}

	Since the new $\mu':=|E(G'_1)\cap E(G'_2)|$ is largest w.r.t.\ the original~$\mu$ if $|E(G_i)\cap E(G_{i+1})|$ is constant for all $i$, we obtain:
	\begin{corollary}\label{cor:better-tim}
		For any \IM 2 $\alpha(\mu)$-approximation where $\alpha(\mu)$ is a (typically decreasing) function of~$\mu$,
		there is an $\alpha\big((\tau-1)\mu\big)$-approximation for \MIM.
		Using Algorithm~\ref{algo:Twostage approx},
			this yields a ratio of $1/\!\sqrt{2(\tau-1)\mu}$;
		for \IM 3 and \IM 4 this is tighter than~\cref{thm:2IM-to-MIM}.
	\end{corollary}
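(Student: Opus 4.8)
The plan is to invoke the S-reduction of \cref{thm:1-2-1} as a black box and then control its intersection parameter by the original $\mu$. First I would take the given \MIM instance $\G = (V, E_1, \dots, E_\tau)$ and apply \cref{thm:1-2-1} to obtain, in polynomial time, a \IM 2 instance $\G'$ together with a profit-preserving bijection between the multistage perfect matchings of $\G$ and the 2-stage perfect matchings of $\G'$. Because this bijection preserves intersection profit, it maps optimal solutions to optimal solutions; hence the two instances share the same optimum, and any approximate solution for $\G'$ pulls back along the bijection to a solution for $\G$ of identical profit, and therefore of identical approximation ratio.

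The quantitative heart of the argument is the bound on the new intersection size. By \cref{thm:1-2-1} we have $\mu' \coloneqq |E(G'_1)\cap E(G'_2)| = \sum_{i\in[\tau-1]} |E(G_i)\cap E(G_{i+1})|$. Each summand is at most $\mu$ by definition of $\mu$, so $\mu' \le (\tau-1)\mu$, with equality exactly when every consecutive intersection has size $\mu$ --- this is the worst case alluded to just before the statement. Running the given $\alpha$-approximation $\mathcal A$ on $\G'$ then produces a 2-stage solution whose ratio is $\alpha(\mu')$. Since $\mu' \le (\tau-1)\mu$ and $\alpha$ is monotonically decreasing, we get $\alpha(\mu') \ge \alpha\big((\tau-1)\mu\big)$, so pulling the solution back through the bijection yields an $\alpha\big((\tau-1)\mu\big)$-approximation for \MIM.

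I expect the only real subtlety to lie in this last monotonicity step, which is why the statement hedges with ``typically decreasing'': the transparent bound $\alpha\big((\tau-1)\mu\big)$ only follows because the true guarantee $\alpha(\mu')$ can be weakened by enlarging the argument from $\mu'$ to $(\tau-1)\mu$; if $\alpha$ were not monotone one could still report the sharper but less legible ratio $\alpha(\mu')$. Finally, to obtain the concrete bound I would instantiate $\mathcal A$ with Algorithm~\ref{algo: Twostage approx}, whose ratio is $1/\sqrt{2\mu}$ by \cref{thm:2IM}; substituting the argument $(\tau-1)\mu$ gives $1/\sqrt{2(\tau-1)\mu}$. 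Comparing this against the $1/\sqrt{8\mu}$ bound obtained by combining Algorithm~\ref{algo: Twostage approx} with \cref{thm:2IM-to-MIM} is then a one-line check: for $\tau=3$ and $\tau=4$ the present route gives $1/\sqrt{4\mu}$ and $1/\sqrt{6\mu}$ respectively, both strictly larger than $1/\sqrt{8\mu}$, whereas the two bounds coincide at $\tau=5$ and the present route is inferior for larger $\tau$.
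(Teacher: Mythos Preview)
Your proposal is correct and follows exactly the approach the paper intends: apply the S-reduction of \cref{thm:1-2-1}, bound the new intersection parameter by $\mu' \le (\tau-1)\mu$, and invoke the given \IM 2 approximation. The paper leaves this corollary essentially unproved (only the one-line remark preceding it), so your write-up is a faithful and more detailed expansion of the same idea, including the monotonicity caveat and the explicit numerical comparison with \cref{thm:2IM-to-MIM}.
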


	Assume the approximation ratio for \IM 2 would not depend on~$\mu$.
	Then the above would yield a surprisingly strong result:
	\begin{corollary}\label{cor:const}
		Any \IM 2 $\alpha$-approximation with constant $\alpha$ results in an $\alpha$-approximation of \MIM.
		If \MIM is APX-hard, so is \IM 2.
	\end{corollary}

	\subsection{Approximating \MUM}\label{sec:mum-apx}
	Consider the~\MUM-problem which minimizes the cost.
	As noted in Observation~\ref{observation: Trivial 2-apx for MUM}, a $2$-approximation is easily accomplished.
	However, by exploiting the previous results for \MIM, we obtain better approximations.

	\begin{theorem}\label{thm:MIM-to-MUM}
		Any $\alpha$-approximation of \MIM is a $(2-\alpha)$-approximation of \MUM.
	\end{theorem}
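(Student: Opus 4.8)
The plan is to exploit an exact affine relationship between the two objectives. For any multistage perfect matching $\M=(M_i)_{i\in[\tau]}$, inclusion--exclusion together with $|M_i|=n_i/2$ gives $|M_i\cup M_{i+1}| = n_i/2 + n_{i+1}/2 - |M_i\cap M_{i+1}|$. Summing over $i\in[\tau-1]$ yields $\uprofit(\M) = N - \profit(\M)$, where $N \coloneqq \sum_{i\in[\tau-1]}(n_i/2 + n_{i+1}/2)$ depends only on the instance and not on the choice of $\M$. In particular, maximizing $\profit$ and minimizing $\uprofit$ are attained by the \emph{same} optimal matching $\M^*$; writing $P^* \coloneqq \profit(\M^*)$ for the \MIM optimum, the \MUM optimum equals $\uprofit(\M^*) = N - P^*$.

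Before using this, I would record an elementary upper bound on the profit. For each $i$ we have $|M_i\cap M_{i+1}| \le \min(n_i/2, n_{i+1}/2) \le (n_i/2 + n_{i+1}/2)/2$, where the first inequality holds because the intersection is a common submatching and the second is the $\min\le$ average estimate (the same one underlying \cref{observation: Trivial 2-apx for MUM}). Summing over $i\in[\tau-1]$ gives $\profit(\M)\le N/2$ for every $\M$, and hence in particular $P^*\le N/2$.

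Next, let $\M$ be the matching returned by the assumed $\alpha$-approximation for \MIM, so $\alpha\le 1$ and $\profit(\M)\ge \alpha P^*$; since $\M$ is a multistage perfect matching it is a feasible \MUM solution. Applying the identity from the first step, $\uprofit(\M) = N - \profit(\M) \le N - \alpha P^*$. It remains to compare this value against $(2-\alpha)$ times the \MUM optimum $N-P^*$.

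The crux is therefore the single inequality $N - \alpha P^* \le (2-\alpha)(N - P^*)$, which I would verify by rearrangement: it is equivalent to $(1-\alpha)(N - 2P^*)\ge 0$. Both factors are nonnegative, since $1-\alpha\ge 0$ (as $\alpha\le 1$) and $N - 2P^*\ge 0$ by the bound $P^*\le N/2$ from the second step. This yields $\uprofit(\M)\le (2-\alpha)(N-P^*)$, i.e.\ the claimed $(2-\alpha)$-approximation guarantee for \MUM. I expect the only non-routine part to be spotting the factorization $(1-\alpha)(N-2P^*)$ and recognizing that the side condition $P^*\le N/2$ it requires is exactly the averaged form of the trivial $\min\le$ average bound; everything else is mechanical.
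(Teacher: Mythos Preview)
Your proof is correct and follows essentially the same approach as the paper: both establish the identity $\uprofit(\M)=\xi-\profit(\M)$ (your $N$ is the paper's $\xi$), bound the \MIM optimum by $\xi/2$ via $\min\le$ average, and then compare $\xi-\alpha\opt_\cap$ against $(2-\alpha)(\xi-\opt_\cap)$. The only cosmetic difference is that the paper concludes by noting the monotonicity of $f(\opt_\cap)=(\xi-\alpha\opt_\cap)/(\xi-\opt_\cap)$ and plugging in $\opt_\cap=\xi/2$, whereas you verify the same inequality directly via the factorization $(1-\alpha)(N-2P^*)\ge 0$.
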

	\begin{proof}
		Recall that an optimal solution of \MIM constitutes an optimal solution of \MUM.
		As before, we denote the heuristic sequence of matchings by~$(M_i)_{i\in\range{\tau}}$ and the optimal one by $(M^*_i)_{i\in\range{\tau}}$.
		Let $\xi \coloneqq  \sum_{i \in \range{\tau-1}} (n_i + n_{i+1}) / 2$.
		Consider the solutions' values w.r.t.\ \MUM:
		\begin{linenomath}
		\[
			\frac{\apx_\cup}{\opt_\cup}
			=    \frac{\sum_{i \in \range{\tau-1}}\uprofit(M_i,M_{i+1})}{\sum_{i \in \range{\tau-1}}\uprofit(M^*_i,M^*_{i+1})}
			=    \frac{\xi - \sum_{i \in \range{\tau-1}}|M_i\cap M_{i+1}|}{\xi - \sum_{i \in \range{\tau-1}}|M^*_i\cap M^*_{i+1}|}
			%=    \frac{\xi - \sum_{i \in \range{\tau-1}}|M_i\cap M_{i+1}|}{\xi - \opt_\cap}
			\leq \frac{\xi - \alpha\cdot\opt_\cap}{\xi - \opt_\cap}
			=: f.
		\]
		\end{linenomath}
		As $0 < \alpha < 1$, $f$ is monotonously increasing in $\opt_\cap$ if $0 \leq \opt_\cap < \xi$.
		Thus, since $\opt_\cap \leq \sum_{i \in \range{\tau-1}} \min(n_i,n_{i+1})/2 \leq \sum_{i \in \range{\tau-1}} (n_i+n_{i+1})/4 = \xi/2$,
			it follows that $\apx_\cup / \opt_\cup \leq (\xi-\alpha\cdot\xi/2) / (\xi - \xi/2) = 2-\alpha$.
	\end{proof}

	\begin{corollary}\label{cor:UM-approx}
		Let $r \coloneqq \min\{8,2(\tau-1)\}$.
		We have a $\big(2-1/\!\sqrt{r\cdot\mu}\big)$-approximation for \MUM.
	\end{corollary}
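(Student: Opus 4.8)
The plan is to feed the stronger of our two \MIM-approximations into \cref{thm:MIM-to-MUM}. That theorem converts any $\alpha$-approximation for \MIM into a $(2-\alpha)$-approximation for \MUM, and since $2-\alpha$ is decreasing in $\alpha$, the \MUM-guarantee improves as $\alpha$ grows. Hence, among the \MIM-guarantees available to us, I would select the one with the \emph{largest} value of $\alpha$.

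We have exactly two candidates, both of the shape $1/\sqrt{c\mu}$: \cref{cor:MIM-approx} provides $\alpha_1 := 1/\sqrt{8\mu}$ (Algorithm~\ref{algo: Twostage approx} run inside Algorithm~\ref{algo:sequence-apx}), while \cref{cor:better-tim} provides $\alpha_2 := 1/\sqrt{2(\tau-1)\mu}$ (the S-reduction of \cref{thm:1-2-1} composed with \cref{thm:2IM}). Among two expressions $1/\sqrt{c\mu}$ the larger is the one with the smaller constant $c$; thus, setting $r := \min\{2(\tau-1),8\}$, the combined guarantee is $\alpha := \max\{\alpha_1,\alpha_2\} = 1/\sqrt{r\mu}$. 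Plugging this $\alpha$ into \cref{thm:MIM-to-MUM} then yields the claimed $(2-1/\sqrt{r\mu})$-approximation for \MUM.

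The one point requiring care is the admissibility of $\alpha$, since the proof of \cref{thm:MIM-to-MUM} assumes $0<\alpha<1$. I would first observe that $\tau\geq 2$ forces $r\geq 2$, and that we may assume $\mu\geq 1$ (if $\mu=0$, then every feasible multistage matching has identical union cost, so any solution is already optimal); hence $r\mu\geq 2>1$ and $0<\alpha<1$ as needed. There is no genuine obstacle here; the only conceptual subtlety is the inversion of optimization sense: because we transfer a maximization ratio (where larger $\alpha$ is better) into a minimization ratio, we must take the \emph{minimum} constant $r$, not the maximum. The crossover sits at $\tau=5$, where $2(\tau-1)=8$: for $\tau\leq 5$ the stage-aggregating bound of \cref{cor:better-tim} is at least as strong, and for $\tau>5$ the path-matching bound of \cref{cor:MIM-approx} is strictly better, matching the tightness remark already made in \cref{cor:better-tim}.
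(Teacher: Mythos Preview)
Your proposal is correct and follows exactly the route indicated in the paper: pick the better of the two \MIM guarantees from \cref{cor:MIM-approx} and \cref{cor:better-tim}, which is $1/\sqrt{r\mu}$ with $r=\min\{2(\tau-1),8\}$, and feed it into \cref{thm:MIM-to-MUM}. Your added check that $0<\alpha<1$ (handling the degenerate case $\mu=0$ separately) is a welcome piece of care that the paper leaves implicit.
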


	Note that a similar reduction from \MIM to \MUM is not achieved as easily:
	Consider any $(1+\varepsilon)$-approximation for \MUM.
	Choose an even integer~$k\geq 6$ such that $k/(k-1) \leq 1+\varepsilon$; consider a spanning $2$-stage instance
		where each stage is a $k$-cycle and $E_\cap$ consists of a single edge~$e$.
	The optimal 2-stage perfect matching~$\M^*$ that contains~$e$ in both stages has profit~$\profit(\M^*) = 1$ and cost~$\uprofit(\M^*) = 2\cdot k/2 - 1 = k - 1$.
	A 2-stage perfect matching $\M$ that does not contain~$e$ still satisfies $\uprofit(\M) = k$ and as such is an $(1+\varepsilon)$-approximation for \MUM.
	However, its profit~$p(\M) = 0$ does not provide any approximation of $\profit(\M^*) = 1$.

	As for \MIM, we aim to extend a given approximation for \UM 2 to a general approximation for \MUM.
	Unfortunately, we cannot use~\cref{thm:1-2-1,thm:MIM-to-MUM} for this, as an approximation for \UM 2 does not generally constitute one for \IM 2 (and \MIM).
	On the positive side, a similar approach as used in the proof of~\cref{thm:2IM-to-MIM} also works for minimization.

	\begin{theorem}\label{thm:2UM-to-MUM}
	Any $\alpha$-approximation $\mathcal A$ for \UM 2 results in a $(1 + \alpha/2)$-approximation for \MUM by using $\mathcal A$ in  Algorithm~\ref{algo:sequence-apx}.
	\end{theorem}

	\begin{proof}
		As before, let $(M_i^*)_{i\in\range{\tau}}$ denote an optimal solution for \MUM.
		For each $i\in\range{\tau -1}$, $(S_i,T_i)$ denotes the output of $\mathcal A(V, E_i, E_{i+1})$.
		For $L\subseteq \range{\tau-1}$, let $\xi(L) \coloneqq \sum_{i\in L}(n_i + n_{i+1})/2$ and $\sigma(L)\coloneqq \sum_{i\in L}|S_i\cup T_i|$.
		Note that $w_i\coloneqq\xi(\{i\}) - \sigma(\{i\})$ equals the weight of $e_i$.
		We define $I\coloneqq \{i\in\range{\tau-1} \mid e_i\in M_P\}$ as the set of indices corresponding to $M_P$ and $J\coloneqq \range{\tau-1}\setminus I$ as its complement.
		By Observation~\ref{lem: Path matching}, we have $w\big(E(P)\big) \leq 2\cdot w(M_P)$, thus
		\begin{linenomath}
		\begin{align*}
		\xi(I) - \sigma(I) + \xi(J) -\sigma(J) = w\big(E(P)\big) \leq 2\cdot w(M_P) = 2\big(\xi(I) - \sigma(I)\big)\\
		\mathllap{\Rightarrow}\,\sigma(I) + \xi(J) \leq \xi(I) + \sigma(J)
			\Rightarrow 2\big(\sigma(I) + \xi(J)\big)
			\leq \xi(I\cup J) + \sigma(I\cup J).
		\end{align*}
		\end{linenomath}
		\noindent The trivial upper bound $\xi$ suffices to bound the algorithm's solution value:%
		\begin{linenomath}
		\[
		\apx
		= \sigma(I) + \sum_{j\in J}|M_j\cup M_{j+1}|
		\leq \sigma(I) + \xi(J)
		\leq\frac12\big(\xi(I \cup J) + \sigma(I \cup J)\big).\\
		\]
		\end{linenomath}
		Since $\sigma(I\cup J)$ $\alpha$-approximates the sum of all \UM 2 instances' solution values, we have $ \sigma(I \cup J) \leq \alpha\cdot\opt$.
		For each transition, any solution satisfies $(n_i + n_{i+1})/4 \leq |M_i \cup M_{i+1}|$ and hence $\xi(I\cup J) \leq 2\cdot \opt$.
		Finally, we obtain the claimed ratio: $\apx \leq 1/2\cdot\big(2\cdot \opt + \alpha\cdot \opt\big) = (1+\alpha/2)\cdot\opt$.\qedhere
	\end{proof}

\section{Conclusion}
In this paper we presented the first approximation algorithm for \IM 2, having a tight approximation ratio of $1/\!\sqrt{2\mu}$.
It remains open if a constant factor approximation for \IM 2 is possible;
	however, we showed that this would imply a constant factor approximation for \MIM.
We further showed two ways in which \MIM and \MUM can be approximated by using any algorithm that approximates \IM 2,
	thereby also presenting the first approximation algorithms for multistage matching problems with an arbitrary number of stages.
We are confident that our techniques are applicable to a broader set of related problems as well.
%\clearpage

\bibliography{main}

\begin{thebibliography}{10}
\providecommand{\url}[1]{\texttt{#1}}
\providecommand{\urlprefix}{URL }
\providecommand{\doi}[1]{https://doi.org/#1}

\bibitem{AMSZ20}
Akrida, E.C., Mertzios, G.B., Spirakis, P.G., Zamaraev, V.: {Temporal vertex
  cover with a sliding time window}. Journal of Computer and System Sciences
  \textbf{107},  108--123 (2020).
  \doi{https://doi.org/10.1016/j.jcss.2019.08.002}

\bibitem{BEK19}
Bampis, E., Escoffier, B., Kononov, A.: {LP-based algorithms for multistage
  minimization problems}. arXiv  (2019), \url{arxiv.org/abs/1909.10354}

\bibitem{BELP18}
Bampis, E., Escoffier, B., Lampis, M., Paschos, V.T.: {Multistage Matchings}.
  In: 16th Scandinavian Symposium and Workshops on Algorithm Theory (SWAT
  2018). vol.~101, pp. 7:1--7:13 (2018). \doi{10.4230/LIPIcs.SWAT.2018.7}

\bibitem{BEST19}
Bampis, E., Escoffier, B., Schewior, K., Teiller, A.: {Online Multistage Subset
  Maximization Problems}. In: 27th Annual European Symposium on Algorithms (ESA
  2019). vol.~144, pp. 11:1--11:14 (2019). \doi{10.4230/LIPIcs.ESA.2019.11}

\bibitem{BET19}
Bampis, E., Escoffier, B., Teiller, A.: {Multistage Knapsack}. In: 44th
  International Symposium on Mathematical Foundations of Computer Science (MFCS
  2019). vol.~138, pp. 22:1--22:14 (2019). \doi{10.4230/LIPIcs.MFCS.2019.22}

\bibitem{BBR19}
Baste, J., Bui-Xuan, B.M., Roux, A.: {Temporal matching}. Theor. Comput. Sci.
  \textbf{806},  184--196 (2020). \doi{10.1016/j.tcs.2019.03.026}

\bibitem{BS15}
Bernstein, A., Stein, C.: {Fully Dynamic Matching in Bipartite Graphs}. In:
  Proceedings of 42nd International Colloquium on Automata, Languages and
  Programming (ICALP 2015). pp. 167--179 (2015).
  \doi{10.1007/978-3-662-47672-7\_14}

\bibitem{BHI18}
Bhattacharya, S., Henzinger, M., Italiano, G.F.: {Deterministic Fully Dynamic
  Data Structures for Vertex Cover and Matching}. SIAM Journal on Computing
  \textbf{47}(3),  859--887 (2018). \doi{10.1137/140998925}

\bibitem{BLSZ14}
Bosek, B., Leniowski, D., Sankowski, P., Zych, A.: {Online Bipartite Matching
  in Offline Time}. In: 2014 IEEE 55th Annual Symposium on Foundations of
  Computer Science. pp. 384--393 (2014). \doi{10.1109/FOCS.2014.48}

\bibitem{C18}
Casteigts, A.: {A Journey through Dynamic Networks (with Excursions)}.
  Habilitation, Universit{\'e} de Bordeaux (2018),
  \url{tel.archives-ouvertes.fr/tel-01883384}

\bibitem{E91}
Eppstein, D.: {Offline algorithms for dynamic minimum spanning tree problems}.
  In: Algorithms and Data Structures (WADS 1991). pp. 392--399 (1991).
  \doi{10.1007/BFb0028278}

\bibitem{FNRZ19}
Fluschnik, T., Niedermeier, R., Rohm, V., Zschoche, P.: {Multistage Vertex
  Cover}. In: 14th International Symposium on Parameterized and Exact
  Computation (IPEC 2019). vol.~148, pp. 14:1--14:14 (2019).
  \doi{10.4230/LIPIcs.IPEC.2019.14}

\bibitem{GJ79}
Garey, M.R., Johnson, D.: {Computers and Intractability: A Guide to the Theory
  of NP-Completeness}. W. H. Freeman (1979)

\bibitem{GTW14}
Gupta, A., Talwar, K., Wieder, U.: {Changing Bases: Multistage Optimization for
  Matroids and Matchings}. In: Proceedings of 41st International Colloquium on
  Automata, Languages and Programming (ICALP 2014) (2014).
  \doi{10.1007/978-3-662-43948-7\_47}

\bibitem{HHKNRS19}
Heeger, K., Himmel, A.S., Kammer, F., Niedermeier, R., Renken, M., Sajenko, A.:
  {Multistage Problems on a Global Budget}. Theor. Comput. Sci.  \textbf{868},
  46--64 (2021). \doi{10.1016/j.tcs.2021.04.002}

\bibitem{KKK00}
Kempe, D., Kleinberg, J.M., Kumar, A.: {Connectivity and inference problems for
  temporal networks}. In: Proceedings of the Thirty-Second Annual ACM Symposium
  on Theory of Computing (STOC 2000). pp. 504--513 (2000).
  \doi{10.1145/335305.335364}

\bibitem{LP86}
Lov{\'{a}}sz, L., Plummer, M.: {Matching Theory}. American Mathematical Society
  (1986)

\bibitem{MMNZZ19}
Mertzios, G.B., Molter, H., Niedermeier, R., Zamaraev, V., Zschoche, P.:
  {Computing Maximum Matchings in Temporal Graphs}. In: 37th International
  Symposium on Theoretical Aspects of Computer Science (STACS 2020). pp.
  27:1--27:14 (2020). \doi{10.4230/LIPIcs.STACS.2020.27}

\bibitem{OS14}
Michail, O., Spirakis, P.G.: {Traveling Salesman Problems in Temporal Graphs}.
  In: Mathematical Foundations of Computer Science 2014 (MFCS 2014) (2014).
  \doi{10.1007/978-3-662-44465-8\_47}

\bibitem{RV89}
Rabin, M.O., Vazirani, V.V.: {Maximum Matchings in General Graphs through
  Randomization}. Journal of Algorithms  \textbf{10} (1989).
  \doi{10.1016/0196-6774(89)90005-9}

\bibitem{S07}
Sankowski, P.: {Faster Dynamic Matchings and Vertex Connectivity}. In:
  Proceedings of the Eighteenth Annual ACM-SIAM Symposium on Discrete
  Algorithms (SODA 2007). pp. 118--126 (2007)

\bibitem{T00}
Thorup, M.: {Near-Optimal Fully-Dynamic Graph Connectivity}. In: Proceedings of
  the Thirty-Second Annual ACM Symposium on Theory of Computing (STOC 2000).
  pp. 343--350 (2000). \doi{10.1145/335305.335345}

\end{thebibliography}

\clearpage

%%%%%%%%%%%%%%%%%%%%%%%%%%%%%%%%%%%%%%%%%%%%%%%%%%
%%%%%%%%%%%%%%%%%%%%%%%%%%%%%%%%%%%%%%%%%%%%%%%%%%
\appendix
\section*{APPENDIX}

\section{A note on approximating \MaxMPM}\label{appendix:MaxMPM}

The proposed $(1/2)$-approximation for \MaxMPM in~\cite{BELP18} does not work.
It takes a temporal graph as input, where each stage may be an arbitrary graph (not necessarily complete),
picks a matching for every second stage~$G_i$, and reuses the same matching for stage $G_{i+1}$.
Thus, every second stage transition is optimal, whereas every other second transition potentially constitutes a worst case. \emph{If} the algorithm's solution is feasible, we indeed yield the proposed approximation ratio.
However, such an approach is inherently problematic as there is no reason why a matching in $G_i$ would need to be feasible for $G_{i+1}$.
In fact, consider a temporal graph $\G=(V,E_1,...,E_\tau)$ with $V=\{v_1,...,v_4\}$.
Let $E_i = \{v_1v_2,v_3v_4\}$ for odd~$i$,
and $E_i = \{v_2v_3,v_4v_1\}$ for even~$i$.
No perfect matching in $E_i$ is also a perfect matching in~$E_{i+1}$.

Thus, although any $\alpha$-approximation for \MaxMPM would directly yield an $\alpha$-approximation for \MIM on spanning temporal graphs, we currently do not know of any such algorithm.
In fact, a constant-factor approximation seems difficult to obtain, cf.~\cref{thm:LP-gap}.
Personal communication with B.~Escoffier confirmed our counterexample.
One may consider a relaxed version of \MaxMPM where one tries to find matchings of large weight in each stage, formally optimizing the weighted sum between the profit and the summed stagewise matching weights.
Observe that in this scenario it is not guaranteed that the optimal solution induces a perfect (nor even maximum) matching in each stage.
However, for this problem their analysis would be correct and their algorithm yields a $(1/2)$-approximation.

\paragraph{Counterexample.}

We examine the $(1/2)$-approximation algorithm~$\mathcal A$ for \MaxMPM that was proposed in~\cite[Theorem~8]{BELP18},
and give a reduced spanning $4$-stage instance
where~$\mathcal A$ does not yield a feasible solution.
We use four stages since the algorithm treats fewer stages as special cases.
Still, the feasibility problem that we are about to describe is inherent to all its variants.

Consider the temporal graph~$\G=(V,E_1,E_2,E_3,E_4)$ given in~\cref{fig:counterexample-algo}, where $E_1 = E_3$ and $E_2 = E_4$ and edges have uniform weight~$0$. We trivially observe that any perfect matching in~$G_i$ is optimal w.r.t.\ edge weight.
For each~$i\in\range{3}$, we have $E_\cap^i \coloneqq  E_i\cap E_{i+1} = \{e_1,e_2,e_3,e_4\}$.
\begin{figure}[tb]
	\centering
	\begin{tikzpicture}[xscale=2, yscale=1.5]
	\newcommand{\covfefe}{
		\node[dot] (a) at (0,0) {};
		\node[dot] (b) at (1,0) {};
		\node[dot] (c) at (2,0) {};
		\node[dot] (d) at (2,1) {};
		\node[dot] (e) at (1,1) {};
		\node[dot] (f) at (0,1) {};
	}

	\newcommand{\tartufo}[2]{
		\draw[R] (a) -- (b);
		\draw[R] (b) -- (c);
		\draw[R] (c) -- node[black, right]{$#2$} (d);
		\draw[R] (d) -- (e);
		\draw[R] (e) -- (f);
		\draw[R] (f) -- node[black, left]{$#1$} (a);
	}

	\newcommand{\toffifee}[1]{
		\draw[B] (a) to[bend right] (c);
		\draw[B] (c) -- (e);
		\draw[B] (e) --node[black, left]{$#1$} (b);
		\draw[B] (b) -- (d);
		\draw[B] (d) to[bend right] (f);
		\draw[B] (f) -- (a);
		\draw[B] (c) -- (d);
	}

	\path[use as bounding box](0,-.35) rectangle (5,1.35);
	\tikzset{dot/.style={draw,circle,fill=black,inner sep=1.5}}
	\tikzset{r/.style={superred,line width=.4mm}}
	\tikzset{b/.style={superblue,semithick,decorate,decoration={snake, segment length=2mm, amplitude=.5mm}}}

	\tikzset{R/.style={r}}
	\tikzset{B/.style={b}}
	\covfefe
	\tartufo{e_1}{e_2}
	\toffifee{f_1}
	\begin{scope}[xshift=3cm]
		\tikzset{R/.style={b}}
		\tikzset{B/.style={r}}
		\covfefe
		\toffifee{f_2}
		\tartufo{e_3}{e_4}
	\end{scope}
\end{tikzpicture}
	\caption{
		Counterexample for the proposed $(1/2)$-approximation. Edges in $E_1{=}E_3$ are curvy (and blue), edges in $E_2{=}E_4$ are straight (and red).
	}\label{fig:counterexample-algo}
\end{figure}

The algorithm proceeds as follows on~$\G$:
For each~$i\in\range{3}$, it computes a perfect matching~$M_i$ in $G_i$ that maximizes~$|M_i \cap E_{i+1}|$.
It constructs the solutions $\M \coloneqq (M_1,M_1,M_3,M_3)$ and $\M' \coloneqq (\hat{M}_1,M_2,M_2,\hat{M}_3)$, where~$\hat{M}_i$ is an arbitrary perfect matching in~$G_i$,
and outputs the solution that maximizes the profit.

Any perfect matching~$M_1$ in~$G_1$ that maximizes~$|M_1 \cap E_2|$
must contain both $e_1$ and $e_2$ and as such also~$f_1$.
This contradicts the feasibility of~$\M$, since $f_1\not\in E_2$.
Conversely, any such perfect matching~$M_2$ in $G_2$
must contain both  $e_3$ and $e_4$ and as such also $f_2$.
Again, this contradicts the feasibility of $\M'$, since $f_2\not\in E_3$.
It follows that the algorithm cannot pick a feasible solution.

We are not aware of any way to circumvent this problem.

\section{Proof of \cref{thm: 2SPM is NP-hard}}\label{appendix:hardness}

\setcounterref{theorem}{thm: 2SPM is NP-hard}
\addtocounter{theorem}{-1}

\begin{theorem}
	\thmhardness
\end{theorem}

\begin{proof}
	We will perform a reduction from \textsf{MaxCut}~\cite{GJ79} to \IM 2.
	In \textsf{MaxCut}, one is given an undirected graph $G=(V,E)$, a natural number $k$ and the question is to decide whether there is an $S\subseteq V$ such that $|\delta(S)|\geq k$.
	In the first stage, we will construct an even cycle for each vertex and each edge of the original graph
	and in the second stage we will create an even cycle for each incidence between an edge and a vertex (cf. \cref{fig:hardness}).
	A perfect matching in the first stage will correspond to a vertex selection
	and a perfect matching in the second stage will allow us to count the edges that are incident to exactly one selected~vertex.

	Given an instance $\mathcal I \coloneqq (G=(V,E),k)$ of \textsf{MaxCut}, we construct an instance $\mathcal J \coloneqq (\G, \kappa)$ of \IM 2.
	Set $\kappa \coloneqq 3|E| + k$.
	We start with an empty 2-stage graph $\G\coloneqq (V',E_1,E_2)$.

	Let $I\coloneqq\{(v,e) \mid v\in V, e\in \delta(v)\}$ be the set of incidences.
	For each $(v,e) \in I$, we add two new disjoint 2-paths to $E_1\cap E_2$ and call them $X^e_v$ and $Y^e_v$.
	Mark one edge of each $X^e_v$ as~$x^e_v$ and one edge of each $Y^e_v$ as $y^e_v$.
	We will refer to the endpoint of $X^e_v$~($Y^e_v$) incident to $x^e_v$~($y^e_v$) as the \emph{marked endpoint} of $X^e_v$~(respectively~$Y^e_v$).

	In $G_1$, for each $e=vw\in E$, we generate a $6$-cycle through $Y^e_v$ and~$Y^e_w$ by adding an edge between the marked endpoint of one path and the unmarked endpoint of the other, and vice versa.
	Furthermore, for each $v\in V$, we generate a cycle of length $4|\delta(v)|$ through the paths $X(v) \coloneqq \{X^e_v\mid e\in\delta(v)\}$ as follows:
	Denote the elements of $X(v)$ as an arbitrarily ordered sequence $(X_i)_{i\in\range{\delta(v)}}$.
	For each $i\in\range{\delta(v)}$, connect the marked endpoint of $X_i$ to the unmarked endpoint of $X_{(i\bmod \delta(v))+1}$ by a $2$-path, each with a new inner vertex.
	In $G_2$, for each $(v,e)\in I$, we generate a $6$-cycle through $X^e_v$ and $Y^e_v$ by adding an edge between the marked and an edge between the unmarked endpoints of the $2$-paths, respectively.
	$G_1$ consists of $|V| + |E|$ and $G_2$ of $2|E|$ disjoint even cycles, thus $\G$ is reduced.

	\begin{figure}[tb]
		\centering
		\begin{tikzpicture}[yscale=.8]
	\useasboundingbox[green] (-4,-1.25) rectangle (4,1.25);
	\tikzset{dot/.style={draw,circle,fill=black,inner sep=1.5}}
	\tikzset{R/.style={superred,line width=.4mm}}
	\tikzset{B/.style={superblue,semithick,decorate,decoration={snake, segment length=2mm, amplitude=.5mm}}}
	\tikzset{dots/.style={B,semithick,densely dotted}}

	% edge cycle C_e
	\node[dot] at (-1,0) (L2) {};
	\node[dot, above of=L2] (L1) {};
	\node[dot, below of=L2] (L3) {};
	\node[dot] at (1,0) (L5) {};
	\node[dot, above of=L5] (L0) {};
	\node[dot, below of=L5] (L4) {};

	\draw (L0) edge[B] (L1);
	\draw (L1.center) edge[B] (L2.center);
	\draw (L1.center) edge[R] (L2.center);
	\draw (L2.center) edge[B] (L3.center);
	\draw (L2.center) edge[R] node[left,black] {$y^e_v$} (L3.center);
	\draw (L3) edge[B] node[black,below,yshift=-3mm] {$e=vw$} (L4);
	\draw (L4.center) edge[B] (L5.center);
	\draw (L4.center) edge[R] (L5.center);
	\draw (L5.center) edge[B] (L0.center);
	\draw (L5.center) edge[R] node[right,black] {$y^e_w$} (L0.center);

	% vertex cycle X_v
	\node[dot] [left of=L2, node distance=2cm] (Cx2) {};
	\node[dot] [below of=Cx2] (Cx1) {};
	\node[dot] [above of=Cx2] (Cx3) {};
	\node[dot] [below left of=Cx1,yshift=3mm] (Cx0) {};
	\node[dot] [above left of=Cx3,yshift=-3mm] (Cx4) {};

	\draw (Cx0) edge[B] (Cx1);
	\draw (Cx1.center) edge[B] (Cx2.center);
	\draw (Cx1.center) edge[R] node[black,right] {$x^e_v$} (Cx2.center);
	\draw (Cx2.center) edge[B] (Cx3.center);
	\draw (Cx2.center) edge[R] (Cx3.center);
	\draw (Cx3) edge[B] (Cx4);

	\node [left of=Cx4,xshift=4.5mm] (Cx5x) {};
	\node [left of=Cx4,xshift=2.5mm] (Cx5) {};
	\draw (Cx4.center) edge[B] (Cx5x);
	\draw (Cx4.center) edge[dots] (Cx5);
	\node [left of=Cx0,xshift=4.5mm] (Cx-1x) {};
	\node [left of=Cx0,xshift=2.5mm] (Cx-1) {};
	\draw (Cx0.center) edge[B] (Cx-1x);
	\draw (Cx0.center) edge[dots] (Cx-1);

	\draw (Cx3) edge[R] (L1);
	\draw (L3) edge[R] (Cx1);

	% vertex cycle X_w
	\node[dot] [right of=L5, node distance=2cm] (Cy2) {};
	\node[dot] [below of=Cy2] (Cy1) {};
	\node[dot] [above of=Cy2] (Cy3) {};
	\node[dot] [below right of=Cy1,yshift=3mm] (Cy0) {};
	\node[dot] [above right of=Cy3,yshift=-3mm] (Cy4) {};

	\draw (Cy0) edge[B] (Cy1);
	\draw (Cy1.center) edge[B] (Cy2.center);
	\draw (Cy1.center) edge[R] (Cy2.center);
	\draw (Cy2.center) edge[B] (Cy3.center);
	\draw (Cy3) edge[B] (Cy4);
	\draw (Cy2.center) edge[R] node[black,left] {$x^e_w$} (Cy3.center);

	\node [right of=Cy4,xshift=-4.5mm] (Cy5x) {};
	\node [right of=Cy4,xshift=-2.5mm] (Cy5) {};
	\draw (Cy4.center) edge[B] (Cy5x);
	\draw (Cy4.center) edge[dots] (Cy5);
	\node [right of=Cy0,xshift=-4.5mm] (Cy-1x) {};
	\node [right of=Cy0,xshift=-2.5mm] (Cy-1) {};
	\draw (Cy0.center) edge[B] (Cy-1x);
	\draw (Cy0.center) edge[dots] (Cy-1);

	\draw (Cy3) edge[R] (L0);
	\draw (L4) edge[R] (Cy1);

	%Braces
	\draw[decorate,decoration=brace,xshift=1mm] (-0.9,0 |- L1.south)
	-- node[right] {\small $Y^e_v$} (-0.9,0 |- L3.north);
	\draw[decorate,decoration=brace,xshift=-1mm] (0.9,0 |- L4.north)
	-- node[left] {\small $Y^e_w$} (0.9,0 |- L0.south);
	\draw[decorate,decoration=brace,xshift=-1mm] (-3.1,0 |- Cx1.north)
	-- node[left] {\small $X^e_v$} (-3.1,0 |- Cx3.south);
	\draw[decorate,decoration=brace,xshift=1mm] (3.1,0 |- Cy3.center)
	-- node[right] {\small $X_w^e$} (3.1,0 |- Cy1.north);

	\foreach \x in {0,...,4}{
		\node[dot] at (L\x) {};
		\node[dot] at (Cx\x) {};
		\node[dot] at (Cy\x) {};
	}
	\node[dot] at (L5) {};
	\node[draw,circle,semithick,fill=none,minimum width=3mm,inner sep=0] at (L0) {};
	\node[draw,circle,semithick,fill=none,minimum width=3mm,inner sep=0] at (L3) {};
	\node[draw,circle,semithick,fill=none,minimum width=3mm,inner sep=0] at (Cx1) {};
	\node[draw,circle,semithick,fill=none,minimum width=3mm,inner sep=0] at (Cy3) {};
\end{tikzpicture}
		\caption{
			Thm.~\ref{thm: 2SPM is NP-hard}: $E_1$ is curvy blue, $E_2$ is straight red.
			Marked vertices are encircled.
		}\label{fig:hardness}
	\end{figure}
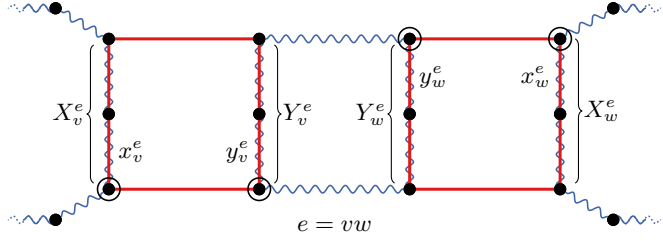

	\paragraph{Claim.}
	\emph{$\mathcal J$ is a yes-instance if and only if $\mathcal I$ is a yes-instance.}

	\begin{proof}[Proof of Claim]\let\qed\relax
		Since both stages of $\G$ consist only of pairwise disjoint even cycles and there are only two perfect matchings in an even cycle, a perfect matching in a stage is determined by choosing one edge in each cycle. For $e=vw\in E$, let $X^e \coloneqq E(X^e_v)\cup E(X^e_w)$ and $Y^e \coloneqq E(Y^e_v)\cup E(Y^e_w)$.
		Observe that for any multistage perfect matching $(M_1,M_2)$, $M_1\cap M_2\subseteq E_\cap = \biguplus_{e\in E} X^e\cup Y^e$.

		\noindent "$\Leftarrow$" \tabto{10mm}
		Suppose there is an $S\subseteq V$, such that $|\delta(S)|\geq k$.
		For each $(v,e)\in I$, add $x^e_v$ to both $M_1$ and $M_2$ if $v\in S$.
		Otherwise, add the unmarked edge of $X^e_v$ to $M_1$ and $M_2$.
		This uniquely determines a perfect matching~$M_2$ in~$G_2$, where $y^e_v\in M_2 \iff x^e_v\in M_2 \iff v\in S$.
		For each $e=vw\in \delta(S)$ with $v\in S$ and $w\not\in S$, add $y^e_v$ to $M_1$; thus $y^e_w\not\in M_1$ and $y^e_v\in M_1\cap M_2$.
		For $e=vw\not\in \delta(S)$, add either $y^e_v$ or $y^e_w$ to $M_1$ (chosen arbitrarily).
		This determines a 2-stage perfect matching $(M_1,M_2)$.

		Consider some edge $e=vw\in E$.
		The intersection $M_1\cap M_2$ contains two edges of $X^e$.
		If $e\in \delta(S)$, it also contains two edges of $Y^e$, one marked and one unmarked.
		If $e\not\in \delta(S)$, $M_1\cap M_2$ contains exactly one edge of~$Y^e$.
		Thus, $|M_1\cap M_2| = |\biguplus_{e\in E} M_1\cap M_2\cap X^e| + |\biguplus_{e\in E} M_1\cap M_2\cap Y^e| = 3|E| + |\delta(S)|$.

		\newcommand{\qw}{m}
		\noindent "$\Rightarrow$" \tabto{10mm}
		Let $(M_1,M_2)$ be a multistage perfect matching in $\G$ with $|M_1\cap M_2|\geq 3|E| + k$.
		Observe that $|M_1\cap M_2| = \sum_{e\in E}\qw_e$ with $\qw_e \coloneqq|M_1\cap M_2\cap (X^e\cup Y^e)|\leq 4$ for each $e\in E$.
		Thus, by pigeonhole principle, there are at least $k$ edges with $\qw_e= 4$.

		$M_1$ yields a selection $S\subseteq V$:
		Select $v\in V$ if and only if $X(v)\subseteq M_1$.
		Observe that either all or none of the edges in $X(v)$ are matched simultaneously in a perfect matching in $G_1$.
		It can be seen that $\qw_e = 4$ if and only if $e\in \delta(S)$,
		thus $|\delta(S)|\geq k$.\hfill$\triangleleft$
	\end{proof}

	The cycles of length $4|\delta(v)|$ may have introduced an even number of vertices~$W\subseteq V$ that are isolated in $G_2$.
	To make \G spanning, we add to~$E_2$ an even cycle on~$W$.
	This neither interferes with $E_\cap$ nor the profit~$\profit$, since $W$ is an independent set in the first stage~$G_1$.
\end{proof}

\section{Proof of \cref{thm:LP-gap}}\label{appendix:ILP}

In the context of classical (perfect) matchings, the standard ILP formulation and its LP-relaxation describe the very same feasible points (called \emph{matching polytope}),
which is the corner stone of the problem being solvable in polynomial time~\cite{LP86}.
Given a 2-stage graph $\G=(V,E_1,E_2)$, the natural LP-formulation for \IM 2 starts with the product of two distinct such perfect matching polytopes.
Let $\delta_\ell(v)$ denote all edges incident to vertex~$v$ in~$G_\ell$, and
let $(M_1,M_2)$ be a 2-stage perfect matching in \G. For each $\ell\in\range{2}$, we model $M_\ell$ via the standard matching polytope:
For each~$e\in E_\ell$ there is an indicator variable $x^\ell_e$ that is $1$ if and only if~$e\in M_\ell$.
The constraints~\eqref{ILP:matching} below suffice for bipartite graphs; for general graphs one also
considers the \emph{blossom constraints}~\eqref{ILP:blossom}.
Additionally to these standard descriptions, for each~$e\in E_\cap$ we use a variable $z_e$ that is~$1$ if and only if $e\in M_1\cap M_2$.
We want to maximize~$\profit(M_1,M_2) = \sum_{e\in E_\cap} z_e$, such that:

\vspace{-\baselineskip}
\begin{subequations}
	\begin{align}
		& \sum_{e\in\delta_\ell(v)} x^\ell_e = 1		&& \forall \ell\in\range{2}, \forall v\in V(E_\ell)	\label{ILP:matching}\\
		& \sum_{e\in W} x^\ell_e \leq \frac{|W|-1}{2}	&& \forall \ell\in\range{2}, \forall W\subseteq V(E_\ell)\text{ with $|W|$ odd}	\label{ILP:blossom}\\
		& z_e \leq x^\ell_e								&& \forall \ell\in\range{2}, \forall e\in E_\cap	\label{ILP:min}\\
		& x^\ell_e \in \{0,1\}							&& \forall \ell\in\range{2}, \forall e\in E_\ell\\
		& z_e \in \{0,1\}								&& \forall e\in E_\cap
	\end{align}
\end{subequations}

Thereby, constraints~(\ref{ILP:min}), together with the fact that we maximize all $z$-values, ensure that $z_e = \min\{x^1_e,x^2_e\}$ in any optimal solution.

\begin{figure}
	\centering
	\begin{tikzpicture}[scale=.9, xscale=.9, yscale=.9]%, transform shape]
	\pgfmathsetmacro{\ksize}{3}
	\pgfmathsetmacro{\rEnd}{4*\ksize+1}
	\path[use as bounding box](-1,-.25-\ksize) rectangle (\rEnd+1,.25+\ksize);
	\tikzset{every node/.style={draw,circle,fill=black,inner sep=1.5}}
	\tikzset{r/.style={superred,line width=.4mm}}
	\tikzset{b/.style={superblue,semithick,decorate,decoration={snake, segment length=2mm, amplitude=.7mm}}}
	\tikzset{g/.style={green!50!black,very thick,densely dotted}}

	\foreach \i in {0,...,\ksize}{
		\foreach \j in {0,...,\ksize}{
			\pgfmathsetmacro{\xcoor}{int(2*(\i+\j))}
			\pgfmathsetmacro{\ycoor}{int(\i-\j)}
			\node (a\i\j) at (\xcoor,\ycoor) {};
			\node[xshift=1cm] at (a\i\j) (b\i\j) {};
			\draw[b] (a\i\j) -- (b\i\j);
			\draw[r] (a\i\j) -- (b\i\j);
		}
	}

	\foreach \i in {1,...,\ksize}{
		\foreach \j in {0,...,\ksize}{
			\pgfmathsetmacro{\iLess}{int(\i-1)}
			\draw[r] (a\i\j) -- (b\iLess\j);
		}
	}

	\foreach \i in {0,...,\ksize}{
		\foreach \j in {1,...,\ksize}{
			\pgfmathsetmacro{\jLess}{int(\j-1)}
			\draw[b] (a\i\j) -- (b\i\jLess);
		}
	}

	\foreach \j in {1,...,\ksize}{
		\pgfmathsetmacro{\jLess}{int(\j-1)}
		\node[xshift=-1cm] (c\j) at (a0\j) {};
		\draw[r] (c\j) -- (a0\jLess);
		\draw[r] (c\j) -- (a0\j);

		\node[xshift=1cm] (d\j) at (b\ksize\jLess) {};
		\draw[r] (d\j) -- (b\ksize\jLess);
		\draw[r] (d\j) -- (b\ksize\j);

		\pgfmathsetmacro{\phi}{int(\ksize-\j+1)}
		\node[xshift=-1cm] (cc\phi) at (a\j0) {};
		\draw[b] (cc\phi) -- (a\jLess0);
		\draw[b] (cc\phi) -- (a\j0);

		\node[xshift=1cm] (dd\phi) at (b\jLess\ksize) {};
		\draw[b] (dd\phi) -- (b\jLess\ksize);
		\draw[b] (dd\phi) -- (b\j\ksize);
	}

	\draw[g] (c1) .. controls (-2.5,0) and (-2.5,\ksize) .. (cc1);
	\draw[g] (c2) .. controls (-2.5,-1.5) and (-2.5,1.5) .. (cc2);
	\draw[g] (c3) .. controls (-2.5,-\ksize) and (-2.5,0) .. (cc3);

	\draw[g] (dd1) .. controls (\rEnd+2.5,0) and (\rEnd+2.5,\ksize) .. (d1);
	\draw[g] (dd2) .. controls (\rEnd+2.5,-1.5) and (\rEnd+2.5,1.5) .. (d2);
	\draw[g] (dd3) .. controls (\rEnd+2.5,-\ksize) and (\rEnd+2.5,0) .. (d3);

	% node labels
	\tikzset{label/.style={draw=none,fill=none}}
	\tikzset{rightLabel/.style={label,xshift=13pt}}
	\tikzset{leftLabel/.style={label,xshift=-12pt}}
	\tikzset{upLabel/.style={label,yshift=7pt}}
	\tikzset{downLabel/.style={label,yshift=-7pt}}

	\tikzset{leftLabeldown/.style={leftLabel,xshift=6pt,yshift=-6pt}}
	\tikzset{leftLabelup/.style={leftLabel,xshift=6pt,yshift=6pt}}
	\tikzset{rightLabeldown/.style={rightLabel,xshift=-6pt,yshift=-7pt}}
	\tikzset{rightLabelup/.style={rightLabel,xshift=-6pt,yshift=6pt}}
	\node[leftLabel,xshift=2pt] at (a00) {$a_{00}$};
	\node[leftLabeldown] at (a01) {$a_{01}$};
	\node[leftLabeldown] at (a02) {$a_{02}$};
	\node[downLabel] at (a03) {$a_{03}$};

	\node[leftLabelup] at (a10) {$a_{10}$};
	\node[leftLabel] at (a11) {$a_{11}$};
	\node[leftLabel] at (a12) {$a_{12}$};
	\node[leftLabel] at (a13) {$a_{13}$};

	\node[leftLabelup] at (a20) {$a_{20}$};
	\node[leftLabel] at (a21) {$a_{21}$};
	\node[leftLabel] at (a22) {$a_{22}$};
	\node[leftLabel] at (a23) {$a_{23}$};

	\node[upLabel] at (a30) {$a_{30}$};
	\node[leftLabel] at (a31) {$a_{31}$};
	\node[leftLabel] at (a32) {$a_{32}$};
	\node[leftLabel] at (a33) {$a_{33}$};

	\node[rightLabel] at (b00) {$b_{00}$};
	\node[rightLabel] at (b01) {$b_{01}$};
	\node[rightLabel] at (b02) {$b_{02}$};
	\node[downLabel] at (b03) {$b_{03}$};

	\node[rightLabel] at (b10) {$b_{10}$};
	\node[rightLabel] at (b11) {$b_{11}$};
	\node[rightLabel] at (b12) {$b_{12}$};
	\node[rightLabeldown] at (b13) {$b_{13}$};

	\node[rightLabel] at (b20) {$b_{20}$};
	\node[rightLabel] at (b21) {$b_{21}$};
	\node[rightLabel] at (b22) {$b_{22}$};
	\node[rightLabeldown] at (b23) {$b_{23}$};

	\node[upLabel] at (b30) {$b_{30}$};
	\node[rightLabelup] at (b31) {$b_{31}$};
	\node[rightLabelup] at (b32) {$b_{32}$};
	\node[rightLabel,xshift=-2pt,yshift=2pt] at (b33) {$b_{33}$};

	\node[downLabel] at (c1) {$c_{1}$};
	\node[downLabel] at (c2) {$c_{2}$};
	\node[downLabel] at (c3) {$c_{3}$};
	\node[upLabel] at (d1) {$d_{1}$};
	\node[upLabel] at (d2) {$d_{2}$};
	\node[upLabel] at (d3) {$d_{3}$};

	\node[upLabel] at (cc1) {$c_{1}$};
	\node[upLabel] at (cc2) {$c_{2}$};
	\node[upLabel] at (cc3) {$c_{3}$};
	\node[downLabel] at (dd1) {$d_{1}$};
	\node[downLabel] at (dd2) {$d_{2}$};
	\node[downLabel] at (dd3) {$d_{3}$};
\end{tikzpicture}
	\caption{
		\IM 2 instance for $k{=}3$ with an integrality gap ${\geq}\sqrt{\mu}$.
		$E_1$ is straight and red, $E_2$ is curvy and blue.
		Dotted green lines identify vertices.
	}\label{fig:ILP gap}
\end{figure}
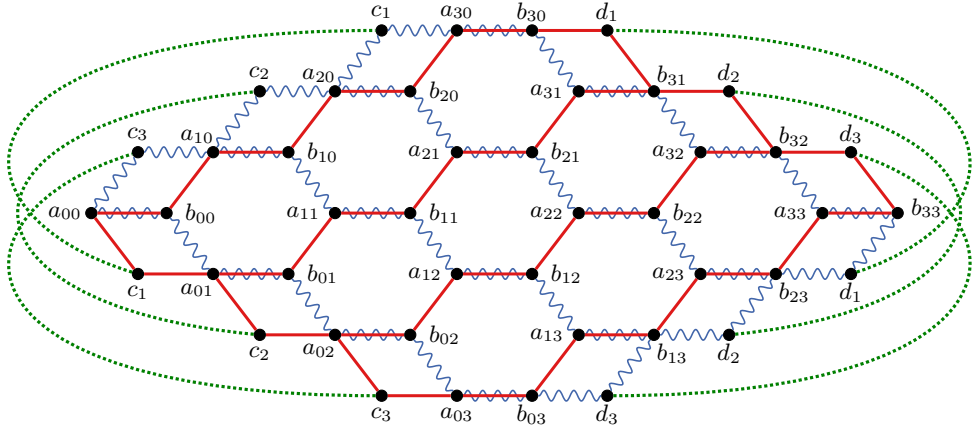

\setcounterref{theorem}{thm:LP-gap}
\addtocounter{theorem}{-1}

\begin{theorem}
	\thmlpgap
\end{theorem}

\begin{proof}
	We construct a family of \IM 2 instances, each with bipartite union graph, parameterized by some parameter $k$. Each instance is reduced and has a maximum profit of $1$, but its LP relaxation has objective value at least $k+1 = \sqrt{\mu}$.

	Fix some $k\geq 3$.
	We construct $\mathcal{G}\coloneqq \mathcal{G}(k) = (V,E_1,E_2)$ as follows (see \cref{fig:ILP gap} for a visualization with $k=3$).
	Let $V \coloneqq \{a_{i,j},b_{i,j}\}_{i,j\in\Range{k}} \cup \{c_i,d_i\}_{i\in\range{k}}$.
	Let $E_\cap \coloneqq \{ a_{i,j}b_{i,j} \}_{i,j\in\Range{k}}$, i.e., the intersection contains precisely the natural pairings of the $a$ and $b$ vertices.
	We call these the \emph{shared} edges.
	In $E_1$, we additionally add edges $\{b_{i-1,j}a_{i,j}\}_{i\in\range{k},j\in\Range{k}}$.
	Similarly, we add edges $\{b_{i,j-1}a_{i,j}\}_{i\in\Range{k},j\in\range{k}}$ to $E_2$.
	Now, both stages consist of $k+1$ disjoint paths of length $2k+1$ which are ``interwoven'' between the stages such that (i) every second edge in each path is shared (starting with the first), and (ii)
	any path in $G_1$ has exactly one edge in common with every path in~$G_2$.
	Let $P^\ell_i$, $\ell\in\range{2}, i\in\Range{k}$, denote those paths in their natural indexing.
	We make each stage connected by joining every pair of ``neighboring'' paths, each together with a $c$ and a $d$ vertex.
	More precisely, we add edges $\{c_ja_{0,j-1},c_ja_{0,j},b_{k,j-1}d_j,b_{k,j}d_j\}_{j\in\range{k}}$ to~$E_1$.
	Analogously, we add $\{c_{\varphi(i)}a_{i-1,0},c_{\varphi(i)}a_{i,0},b_{i-1,k}d_{\varphi(i)},b_{i,k}d_{\varphi(i)}\}_{i\in\range{k}}$ to $E_2$; the indexing function $\varphi(i)\coloneqq k-i+1$ ensures that these new edges are not common to both stages. (In fact, if we would not care for a spanning \G, we could simply use ``new'' vertices instead of reusing $c,d$ in $G_2$.)
	This finishes the construction, and since \G contains no forbidden edges, it is reduced.

	Since the inner vertices of any path $P^1_i$ have degree~2 in $G_1$, any perfect matching in $G_1$ either contains all or none of the path's shared edges.
	Assume some shared edge $a_{0,j}b_{0,j}$ is in a perfect matching in $G_1$.
	Let $C$ be the path $a_{0,0}\, c_1\, a_{0,1}\, c_2\,...\, c_k\, a_{0,k}$ in $G_1$.
	Recall that all $c$-vertices have degree~2 in $G_1$. Since $a_{0,j}$ is matched outside of $C$, all other $a_{0,j'}$, $j'\neq j$, have to be matched with these $c$-vertices.
	Thus, $P^1_j$ is the only path that contributes shared edges to the matching. Conversely, since $C$ contains one less $c$-vertex than $a$-vertices, any perfect matching in $G_1$ has to have at least (and thus exactly) one such path.
	As the analogous statement holds for $G_2$ and by the interweaving property (ii) above, any multistage perfect matching contains exactly one shared edge.

	However, we construct a feasible fractional solution with objective value~$\sqrt{\mu}$:
	\newcommand{\K}{\ensuremath{\lambda}}
	Let~$\K\coloneqq 1/(k+1)$.
	We set the $x$- and $z$-variables of all shared edges to $\K$, satisfying all constraints~\eqref{ILP:min}.
	This uniquely determines all other variable assignments, in order to satisfy~\eqref{ILP:matching}:
	Since the inner vertices of each $P^\ell_i$ have degree $2$ in $G_\ell$, the non-shared edges in these path have to be set to $1-\K$.
	Again consider path $C$: Each $a$-vertex in $C$ has an incident shared edge that contributes $\K$ to the sum in the vertex' constraint~\eqref{ILP:matching};
		there are no other edges incident to $C$.
	Thus, we have to set $x^1_{c_j a_{0,j-1}}=1-j\K$ and $x^1_{c_j a_{0,j}}=j\K$ such that, for each vertex in $C$, its incident variable values sum to 1.
	The analogous statements holds for the corresponding path through $d$-vertices in $G_1$, and the analogous paths in~$G_2$.
	All constraints~\eqref{ILP:matching} are satisfied.
	The blossom constraints~\eqref{ILP:blossom} act only on $x$-variables, i.e., on individual stages.
	Since our graph is bipartite, only considering the $x$-variables of one stage and disregarding \eqref{ILP:blossom} yields the bipartite matching polytope which has only integral vertices; our (sub)solution is an element of this polytope. Thus, \eqref{ILP:blossom} cannot be violated by our assignment.

	By construction we have $\mu=(k+1)^2$. Thus, the objective value of our assignment is $\sum_{e\in E_\cap} \K = \mu/(k+1) %= |E_\cap|/\!\sqrt{|E_\cap|}
	= \sqrt{\mu}$, as desired.
\end{proof}

\section{Proof of Lemma~\ref{lemma:2IM-tightness}}\label{appendix:tightness}

\setcounterref{theorem}{lemma:2IM-tightness}
\addtocounter{theorem}{-1}

\begin{lemma}
	\lemmatightness
\end{lemma}

\newcommand{\marked}[1]{\ensuremath{\underline{#1}}}

\begin{proof}
Consider the following family~$\G_k$ of \IM2 instances, parameterized by number~$k\geq 1$.
An example using $k=4$ is depicted in \cref{fig:tightness}.
In the first stage, for each $i\in\range{k}$ create a $4$-cycle $C_i$ and label two of its adjacent vertices $w'_i$ and $w_i$.
Add a $3$-path with new inner vertices of degree~$2$ from~$w_{i}$ to $w'_{i+1}$ for each $i\in\range{k-1}$.
For each $i\in\range{k-1}$, create a vertex~$v_i$ and an edge $w_i v_i$.
Create a vertex $u$ and an edge $u w_k$.
For each $i\in\Range{k-1}$, create a vertex $u_i$ and an edge $u u_i$.
For each $i\in\range{k-1}$, create a path~$P_i$ from $u_i$ to $v_i$ with $2i+1$ edges
and label the new inner vertices with $a^i_1,b^i_1,a^i_2,b^i_2,...,a^i_i,b^i_i$ in this order.
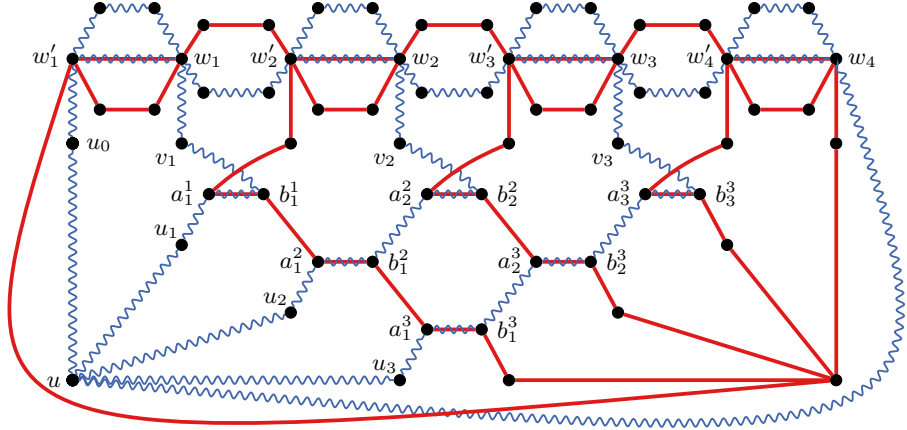
\begin{figure}[tb]
	\centering
	\begin{tikzpicture}[xscale=1.45,yscale=.9]
	\tikzset{every node/.style={draw,circle,fill=black,inner sep=1.5,text height=0pt}}
	\tikzset{R/.style={superred,line width=.5mm}}
	\tikzset{B/.style={superblue,line width=.25mm,decorate,decoration={snake, segment length=1.5mm, amplitude=.5mm}}}

	% node labels
	\tikzset{label/.style={draw=none,fill=none,text height=5pt}}
	\tikzset{leftLabel/.style={label,shift={(-.35,0)}}}
	\tikzset{rightLabel/.style={label,shift={(.35,0)}}}
	\tikzset{upLabel/.style={label,yshift=8pt}}
	\tikzset{downLabel/.style={label,yshift=-7pt}}

	\newcommand{\xShift}{.2}
	\newcommand{\yShift}{.2}
	\tikzset{leftLabeldown/.style={label,shift={(-\xShift,-\yShift)}}}
	\tikzset{leftLabelup/.style={label,shift={(-\xShift,\yShift)}}}
	\tikzset{rightLabeldown/.style={label,shift={(\xShift,-\yShift)}}}
	\tikzset{rightLabelup/.style={label,shift={(\xShift,\yShift)}}}

	% SET k VALUE HERE
	\set{\k}{4}
	\set{\kMinus}{\k-1}
	\set{\kMinusTwo}{\k-2}
	\set{\kDouble}{2*\k}
	\set{\kHalf}{\k/2}
	\set{\kHalfMinus}{\kHalf -1}
	\set{\kHalfPlus}{\kHalf +1}

	% initial path
	\newcommand{\cycleHeight}{.75}
	\newcommand{\cyclewidth}{.25}
	\foreach \i in {1,...,\k}{
		\set{\xa}{2*\i-2}
		\set{\xb}{2*\i-1}
		\node (w'\i) at (\xa,0) {};
		\node (w\i)  at (\xb,0) {};
		\node[leftLabel] at (w'\i) {$w'_\i$};
		\node[rightLabel] at (w\i) {$w_\i$};

		\set{\iMinus}{\i-1}
		\draw[R] (w'\i) -- (w\i);
		\draw[B] (w'\i) -- (w\i);

		% cycles
		\node (w''\i)  at (\xa+\cyclewidth,\cycleHeight) {};
		\node (w'''\i) at (\xb-\cyclewidth,\cycleHeight) {};
		\draw[B] (w'\i) -- (w''\i);
		\draw[B] (w''\i) -- (w'''\i);
		\draw[B] (w'''\i) -- (w\i);

		\node (W''\i)  at (\xa+\cyclewidth,-\cycleHeight) {};
		\node (W'''\i) at (\xb-\cyclewidth,-\cycleHeight) {};
		\draw[R] (w'\i) -- (W''\i);
		\draw[R] (W''\i) -- (W'''\i);
		\draw[R] (W'''\i) -- (w\i);
	}

	% detours
	\newcommand{\detourHeight}{.5}
	\newcommand{\detourwidth}{.2}
	\foreach \i in {1,...,\kMinus}{
		\set{\xa}{2*\i-1}
		\set{\xb}{2*\i}
		\set{\iPlus}{\i+1}
		\node (W''\i)  at (\xa+\detourwidth,\detourHeight) {};
		\node (W'''\i) at (\xb-\detourwidth,\detourHeight) {};
		\draw[R] (w\i) -- (W''\i);
		\draw[R] (W''\i) -- (W'''\i);
		\draw[R] (W'''\i) -- (w'\iPlus);

		\node (q''\i)  at (\xa+\detourwidth,-\detourHeight) {};
		\node (q'''\i) at (\xb-\detourwidth,-\detourHeight) {};
		\draw[B] (w\i) -- (q''\i);
		\draw[B] (q''\i) -- (q'''\i);
		\draw[B] (q'''\i) -- (w'\iPlus);
	}

	%dangling edges
	\newcommand{\offsetOne}{\cycleHeight-.5}
	\foreach \i in {1,...,\kMinus}{
		\set{\xa}{2*\i-1}
		\set{\xb}{2*\i}
		\set{\iPlus}{\i+1}
		\node (v\i) at (\xa,-\offsetOne) {};
		\node[leftLabeldown] at (v\i) {$v_\i$};
		\node (phiv\i) at (\xb,-\offsetOne) {};
		\draw[B] (w\i) -- (v\i);
		\draw[R] (w'\iPlus) -- (phiv\i);
	}
	\node (u0) at (0,-\offsetOne) {};
	\node[rightLabel] at (u0) {$u_0$};
	\node (phiu0) at (\kDouble-1,-\offsetOne) {};
	\draw[B] (u0) -- (w'1);
	\draw[R] (phiu0) -- (w4);

	\foreach \i in {1,...,\kMinus}{
		%shared edges
		\foreach \j in {\i,...,\kMinus}{
				\node (u) at (0,-\offsetOne-\k+.5) {};
				\node (u0) at (0,-\offsetOne) {};
			\set{\xa}{\i+\j-1}
			\set{\xb}{\i+\j}
			\set{\ya}{\i-\j-1}
			\node (a\i\j) at (\xa+.25,\ya-\offsetOne+.25) {};
			\node (b\i\j) at (\xb-.25,\ya-\offsetOne+.25) {};
			\node[leftLabel,shift={(0,-.03)}] at (a\i\j) {$a^\j_\i$};
			\node[rightLabel,shift={(0,-.05)}] at (b\i\j) {$b^\j_\i$};
			\draw[R] (a\i\j) -- (b\i\j);
			\draw[B] (a\i\j) -- (b\i\j);
		}
	}
	\foreach \i in {1,...,\kMinus}{
		%dock upwards
		\draw[B] (v\i) to[bend left=10] (b\i\i);
		\draw[R] (phiv\i) to[bend right=10] (a\i\i);
		%dock downwards
		\set{\phii}{\k-\i}
		\node (u\i) at (\i,-\i-.5-\offsetOne) {};
		\node[leftLabelup] at (u\i) {$u_\i$};
		\node (phiu\phii) at (\k+\phii-1,-\i-.5-\offsetOne) {};
		\draw[B] (u\i) -- (a1\i);
		\draw[R] (phiu\phii) -- (b\phii\kMinus);
	}

	%diagonals
	\foreach \j in {2,...,\kMinus}{
		\set{\jMinus}{\j-1}
		\foreach \i in {1,...,\jMinus}{
			\set{\iPlus}{\i+1}
			\set{\iPhi}{\k-\j}
			\set{\iPhiPlus}{\k-\j+1}
			\set{\jPhi}{\k-\i}
			\draw[B] (b\i\j) -- (a\iPlus\j);
			\draw[R] (a\i\j) -- (b\i\jMinus);
		}
	}

	%star
	\node (u) at (0,-\kMinus-.5-\offsetOne) {};
	\node[leftLabel,shift={(.1,0)}] at (u) {$u$};
	\node (phiu) at (\kDouble-1,-\kMinus-.5-\offsetOne) {};
	\foreach \i in {0,...,\kMinus}{
		\draw[B] (u) -- (u\i);
		\draw[R] (phiu) -- (phiu\i);
	}
	\node[label] (c1) at ($(phiu)+(1.2,-1.2)$) {};
	\node[label] (c2) at ($(u)+(-1.2,-1.2)$) {};
	\draw[B] (u) .. controls (c1) .. (w\k);
	\draw[R] (phiu) .. controls (c2) .. (w'1);

	\pgfresetboundingbox
	\path[green,use as bounding box] ($(u)+(-.7,-.3)$) rectangle ($(w\k)+(.7,.75)$);
\end{tikzpicture}
	\caption{
		\IM2 instance $\G_4$ as in Lemma~\ref{lemma:2IM-tightness}.
		Edges in $E_1$ are curvy and blue, edges in $E_2$ straight and red.
		The vertices are labeled according to the first stage.
	}\label{fig:tightness}
\end{figure}%

The second stage is constructed isomorphically to the first stage.
To avoid ambiguity in the naming, we underline element names of the second stage.
The $2$-stage graph $\G_k$ is completely defined by the following identifications:
	for each $i\in\range{k}$,
		let $\marked{w}'_i = w_{k-i+1}$
		and $\marked{w}_i = w'_{k-i+1}$;
	for each $i\in\range{k-1}$ and each $j\in\range{i}$,
		let $\marked{a}^i_j = b^{k-j}_{k-i}$
		and $\marked{b}^i_j = a^{k-j}_{k-i}$.
Thus, $E_\cap = F\cup A$ is precisely the union of $F\coloneqq\{w_i w'_i\mid i\in\range{k}\}$ and $A\coloneqq\{a^i_j b^i_j\mid i\in\range{k-1},j\in\range{i}\}$.
Observe that $\G_k$ is reduced, its union graph is bipartite,
	and $\mu = k + \sum_{i \in \range{k-1}} i = k(k+1)/2$.

By construction, for any perfect matching~$M$ in the first (second) stage, $|M\cap E_\cap|\leq k$.
Let~$M_F$ ($\marked{M}_F$) denote the unique perfect matching that
	contains $uu_0$ ($\marked{u}\hspace*{1pt}\marked{u}_0$, respectively) and all of $F$.
The pair $(M_F,\marked{M}_F)$ is an optimal solution with profit~$|F|=k$.

Consider an alternative perfect matching~$M$ in the first stage.
In each cycle $C_i$, we consider the shared edge $w_i w'_i$
	and its \emph{opposing edge} (i.e., its unique non-adjacent edge in $C_i$).
We distinguish between three possibilities regarding their memberships in $M$:
	the shared edge and its opposing edge are in~$M$ (\emph{type~Y}),
	only the opposing edge is in $M$ (\emph{type~N1}),
	none of them are in $M$ (\emph{type~N2}).

Picking an edge adjacent to $u$ determines a perfect matching up to the types of some $C_i$-cycles.
For $i\in\Range{k-1}$,
	let $M_i$ denote the unique perfect matching that contains $u u_i$,
	is type~\emph{Y} in $C_{i+1}$,
	but type~\emph{N2} in each $C_j$ for $j > i+1$.
Note that $M_i$
	is type~\emph{N1} in each $C_j$ with $j \leq i$
	and contains all $i$ shared edges along $P_i$.
Thus, $|M_i \cap E_\cap| = i+1$.
Most importantly,
	consider any perfect matching $M'$ in the second stage.
By construction,
	for any $i\in\Range{k-1}$,
	no two edges of $(M_i\cap A)\cup\{w'_{i+1} w_{i+1}\}$ can be contained simultaneously in $M'$.
It follows that $|M_i\cap M'| \leq 1$,

Algorithm~\ref{algo:Twostage approx} may never choose the optimal $M_F$ as a perfect matching for the first stage:
In the first iteration,
	both $M_F$ and $M_{k-1}$ have weight $k$,
	so the algorithm may choose $M_{k-1}$
	and obtain a $2$-stage perfect matching with profit~$1$.
In the following iteration,
	the weight (denoting the preference of edges) of $M_F$ is decreased by~$1$,
	since the edge $w'_k w_k$ has already been chosen in $M_{k-1}$.
Consequently, in each following iteration~$i\in\range{k}$
	the algorithm may choose $M_{k-i}$ over $M_F$,
	each time decreasing the weight of~$M_F$ by~$1$.
After choosing $M_0$ over $M_F$ in iteration~$k$,
	each edge in $E_\cap$ has been in some matching in the first stage;
	the algorithm stops and returns a $2$-stage perfect matching with profit~$1$.

Since $\mu = k(k+1)/2$, the optimal profit is $k = \big(\sqrt{8\mu+1}-1\big)/2$.
Thus, the approximation factor is at most $1/k = 2/(\sqrt{8\mu+1}-1\big)$ which tends to our guarantee of~$1/\!\sqrt{2\mu}$ for increasing~$k$.
\end{proof}
\end{document}